\newif\ifrelsize
\newcommand{\set}[1]{\left\{#1\right\}}
\newtheorem{definition}{Definition}
\newenvironment{fminipage}%
  {\begin{Sbox}\begin{minipage}}%
  {\end{minipage}\end{Sbox}\fbox{\TheSbox}}
\def\expec#1#2{{\mathbb{E}}_{#1}\left[ #2 \right]}
\def\defeq{\stackrel{\mathrm{def}}{=}}
\newcommand\Alg{\text{ALG}}
\newcommand\Opt{\text{OPT}}
\tikzset{
  LabelStyle/.style = { rectangle, rounded corners, draw,
                        minimum width = 2em, fill = yellow!50,
                        text = red, font = \bfseries },
  VertexStyle/.append style = { font = \bfseries},
  EdgeStyle/.append style = {->, bend right} 
  }
\begin{document}

\title{\Large The Traveling Firefighter Problem}

\author{Majid Farhadi\thanks{
Supported in part by \texttt{aco.gatech.edu, triad.gatech.edu}.
Part of this work was conducted when authors visited Simons Institute for the Theory of Computing.
}
 \\ Georgia Tech \\ \texttt{farhadi@gatech.edu}
\and
Alejandro Toriello \\ Georgia Tech \\ \texttt{atoriello@isye.gatech.edu}
\and
Prasad Tetali\thanks{Research supported in part by the NSF grant DMS-2055022.
The author's new affiliation is Department of Mathematical Sciences, Carnegie Mellon University, Pittsburgh.} \\ Georgia Tech \\ \texttt{tetali@math.gatech.edu}
}

\date{}

\maketitle


\fancyfoot[R]{\scriptsize{Copyright \textcopyright\ 2021 by SIAM\\
Unauthorized reproduction of this article is prohibited}}





\begin{abstract}
\ifrelsize
\relsize{3}
\fi
We introduce the $L_p$ Traveling Salesman Problem ($L_p$-TSP), given by an origin, a set of destinations, and underlying distances. The objective is to schedule a destination visit sequence for a traveler of unit speed to minimize the Minkowski $p$-norm of the resulting vector of visit/service times. For $p = \infty$ the problem becomes a path variant of the TSP, and for $p = 1$ it defines the Traveling Repairman Problem (TRP), both at the center of classical combinatorial optimization.

\smallskip

$L_p$-TSP can be formulated as a convex mixed-integer program and enables a smooth interpolation between path-TSP and TRP, corresponding to optimal routes from the perspective of a server versus the customers, respectively. The parameter $p$ can affect fairness or efficiency of the solution: The case $p = 2$, which we term the Traveling Firefighter Problem (TFP), models the scenario when the cost/damage due to a delay in service is quadratic in time. 

\smallskip

We provide a polynomial-time reduction of $L_p$-TSP (losing a factor of $1+\varepsilon$ in performance) to the segmented-TSP, a routing problem that defines a constant $O(1+\varepsilon^{-2})$ number of deadlines by which given numbers of vertices should be visited. Subsequently we derive polynomial-time approximation schemes for $L_p$-TSP in the Euclidean metric and the tree metric (for which the problem is strongly NP-hard). 

\smallskip

We also study the all-norm-TSP, in which the objective is to find a route that is (approximately) optimal with respect to the minimization of any norm of the visit times. We improve the approximation bound for this problem to $8$, down from $16$, and further prove an impossibility for an approximation factor better than $1.78$, even in line metrics. Finally, we show the performance of our algorithm can be optimized for a specific norm, particularly yielding a $5.65$-approximation for the TFP on general metrics. 

\smallskip

We leave open several interesting directions to further develop this line of research.

\end{abstract}

\ifrelsize
\relsize{3}
\fi
\section{Introduction}

The $L_p$-TSP is a routing problem seeking to minimize the $L_p$ norm of the vector of visit/service times to a set of customer locations. It generalizes and interpolates between two well-studied problems, the path variant of the TSP and the TRP, also known as the Minimum Latency Problem, which are the two extreme cases in which one minimizes either the largest or the sum of service times.  
By assuming the server's speed is constant, we use time and distance interchangeably in the remainder of the paper.

As one motivating example, the $L_p$-TSP for $p = 2$, which we call the Traveling Firefighter Problem (TFP), abstracts a macro-scale optimal strategy for dispatching a firefighter to minimize the total damage due to fires at various locations.  
Ride-sharing is another use-case of our problem. For example, devising the return route of a school bus, should we optimize the fuel consumption (i.e.\ the driver's time en route) or the average student waiting time? Is it fairer to further penalize larger waiting times, e.g.\ minimizing the sum of squares/cubes of the waiting times? Before formulating the problems under study we introduce some notation.

\smallskip

\paragraph{Notation.}
$[z]$ denotes the set $\set{1, \cdots, z}$ for any positive integer $z$. $\tilde{O}(\cdot)$ is equivalent to $O(\cdot)$, treating $\varepsilon > 0$ as a constant. A metric over a set of nodes $V$ is a distance function $d: V \times V \rightarrow \mathbb{R}_{\ge 0}$ that satisfies symmetry, $d(x,y) = d(y,x) \quad \forall x, y \in V$, identity, $d(x,x) = 0$, and the triangle inequality,
$
d(x,y) \leq d(x,z) + d(z,y) \quad \forall x,y,z \in V\,.
$

\smallskip

\paragraph{The inputs} to the problem are the set of vertices $V$, including both the destinations and the server's starting location, $s \in V$, and the underlying metric $d(\cdot,\cdot)$ over $V$, corresponding to distances (or times) between vertex pairs.

\smallskip

\paragraph{A feasible solution} or route is a permutation $\sigma$ over $V$ that starts at the origin; $\sigma_i$ denotes the $i$\textsuperscript{th} vertex to be visited, so we always have $\sigma_1 = s$.  $\mathcal{F}$ denotes the set of all feasible solutions.
The $i$\textsuperscript{th} smallest visit time,
due to a solution $\sigma \in \mathcal{F}$, 
is denoted
$T^\sigma_i$, i.e.\
\begin{equation*}
T^\sigma_i = \begin{cases}
0 & i = 1 \\
\sum_{j = 2}^{i} d(\sigma_{j-1},\sigma_j) & i \in \{2,\cdots,n\}\,
\end{cases}
\end{equation*}
The visit time for vertex $v$ is denoted $\ell^\sigma_v$. Similarly, $\ell^\sigma_s = 0$ and $\ell^\sigma_v = \sum_{i=2}^{\sigma_i = v} d(\sigma_{i-1},\sigma_i)  \quad \forall v \ne s$.

\begin{definition}[$L_p$-TSP]
The input of the optimization problem $L_p$-TSP is a set of destinations $V$, a starting vertex $s \in V$, and a metric $d: V \times V \rightarrow \mathbb{R}_{\ge 0}$. 
The objective is to find a feasible route, $\sigma \in \mathcal{F}$, starting at $s$ and visiting all $v \in V$, that minimizes the Minkowski $p$-norm of the visit times, i.e.\ $\min_{\sigma \in \mathcal{F}} \|\ell^\sigma\|_p$, where
$$
\|\ell^\sigma\|_p := \left(\sum_{v \in V} |\ell^\sigma_v|^p\right)^{\frac{1}{p}}\,.
$$
\end{definition}

When the problem/objective is clear from context, we denote an optimal route as $\Opt$ and the answer of our algorithm as $\Alg$.

\smallskip

\paragraph{A Better Objective.}
One can verify that the objective (norm) affects various aspects of the routing problem, such as efficiency. $L_p$-TSP enables a smooth transition between two 
extreme objectives. 
For larger $p$'s the objective is strongly affected by dominating (larger) entries of the delay vector and minimizing $\|\ell\|_p$ is more to the benefit of the server. In contrast, smaller $p$'s provide a further aggregated measure of the amount of time that the customers have waited. This trade-off can also be interpreted from a fairness perspective, as increasing $p$ discourages the longest waiting time from becoming too large.

\smallskip

\paragraph{Firefighter Example.}
We further elaborate on why $p \notin \set{1, \infty}$ can be a useful objective by considering the routing of a firefighter.\footnote{
Over the past decade, and for the first time on record, the annual number of acres burned in the United States exceeded $10$ million; this occurred twice \cite{Hoover18}. During $2018$, wildfire damages in California totalled $\$150$ billion \cite{wang2021economic}. 
In July $2019$, a record $2.4$ million acres of the Amazon rainforest were torched \cite{natgeo}.
The optimal allocation, scheduling, and routing of firefighting resources may help to better address this global challenge.
} 
Consider a set of wildfires in dispersed locations, and suppose a skilled firefighter extinguishes any fire the second they arrive at a location; the firefighter must choose the order in which to visit and extinguish the fires. One possible strategy is to choose a sequence in order to finish extinguishing all fires as soon as possible, which corresponds to solving the $L_\infty$-TSP (the path-TSP). However, this may not be the best solution for the firefighter: 
$L_\infty$-TSP minimizes the latest visit time for all fires, while the cost could be affected by all visit times; thus an aggregated measure could be a better objective.

For example, consider the following simple dynamics for the spread of wildfires over uniform territory, ignoring possible differences such as vegetation, weather and wind. After every second, a flammable point of territory is ignited if it is within a unit distance from the burning flames. Under these dynamics, the area of land scorched by the fire is a quadratic function of the elapsed time. 
Therefore, the damage due to the delay on the $i$\textsuperscript{th} visit can be better represented by ${\ell}_{\sigma_i}^2$ and minimizing $\sum_v \ell_v^2 = \|\ell\|_2^2$, or equivalently $\|\ell\|_2$, is a better objective for this scenario compared to other norms, particularly $\|\ell\|_1$ or $\|\ell\|_\infty$. Motivated by this example, we term $L_2$-TSP as the Traveling Firefighter Problem (TFP).

In an applied setting, this approach may require some refinement but the basic idea still applies. 
Land and weather asymmetries can be modeled by a multiplicity of vertices: If a fire spreads twice as fast in area, we can represent it by two vertices overlapping in the metric. One could also generalize the objective to a weighted sum of the squared delays and/or discretize large fires into smaller ones. Moreover, the time required to extinguish a fire can be accounted for by adding a new edge, hanging from the original destination at a distance proportional to the time required to contain that fire, and moving the destination to the other endpoint of the new edge.

\begin{figure}[t]
    \centering
    \subfloat[$L_2$-TSP route]{{
    \begin{tikzpicture}
  \SetGraphUnit{2}
  \Vertex{S}
  \WE(S){A}
  \EA(S){B}
  \EA(B){C}
  \tikzset{EdgeStyle/.append style = {bend left}}
  \Edge(S)(A)
  \Edge(A)(B)
  \Edge(B)(C)
\end{tikzpicture}
    }}%
    \qquad
    \subfloat[$L_1$-TSP route]{{
    \begin{tikzpicture}
  \SetGraphUnit{2}
  \Vertex{S}
  \WE(S){A}
  \EA(S){B}
  \EA(B){C}
  \Edge(S)(B)
  \Edge(B)(C)
  \Edge(C)(A)
\end{tikzpicture}
    }}%
    \caption{Different norms cause different routes.}%
    \label{fig:L2vsL1TSP}
\end{figure}
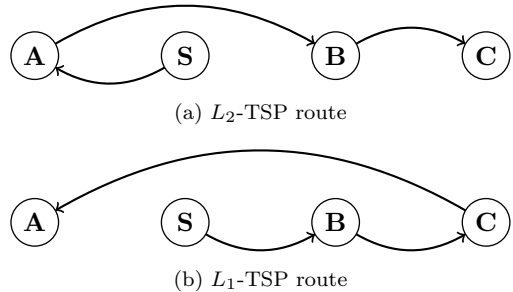

\begin{figure*}[t]
    \centering
    \subfloat[Path-TSP route]{{\includegraphics[trim={15cm 4.5cm 12cm 8cm},clip,width=7cm]{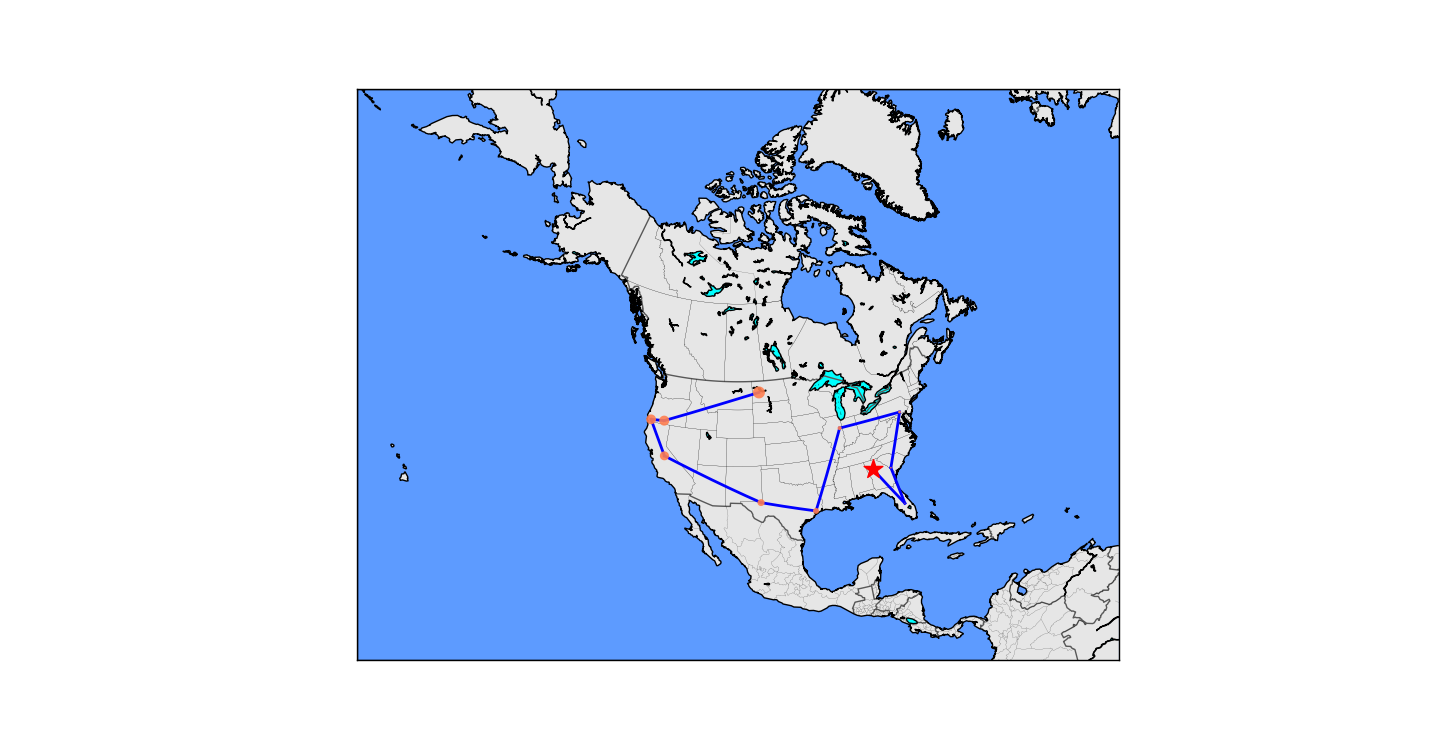}}}%
    \qquad
    \subfloat[TFP route]{{\includegraphics[trim={15cm 4.5cm 12cm 8cm},clip,width=7cm]{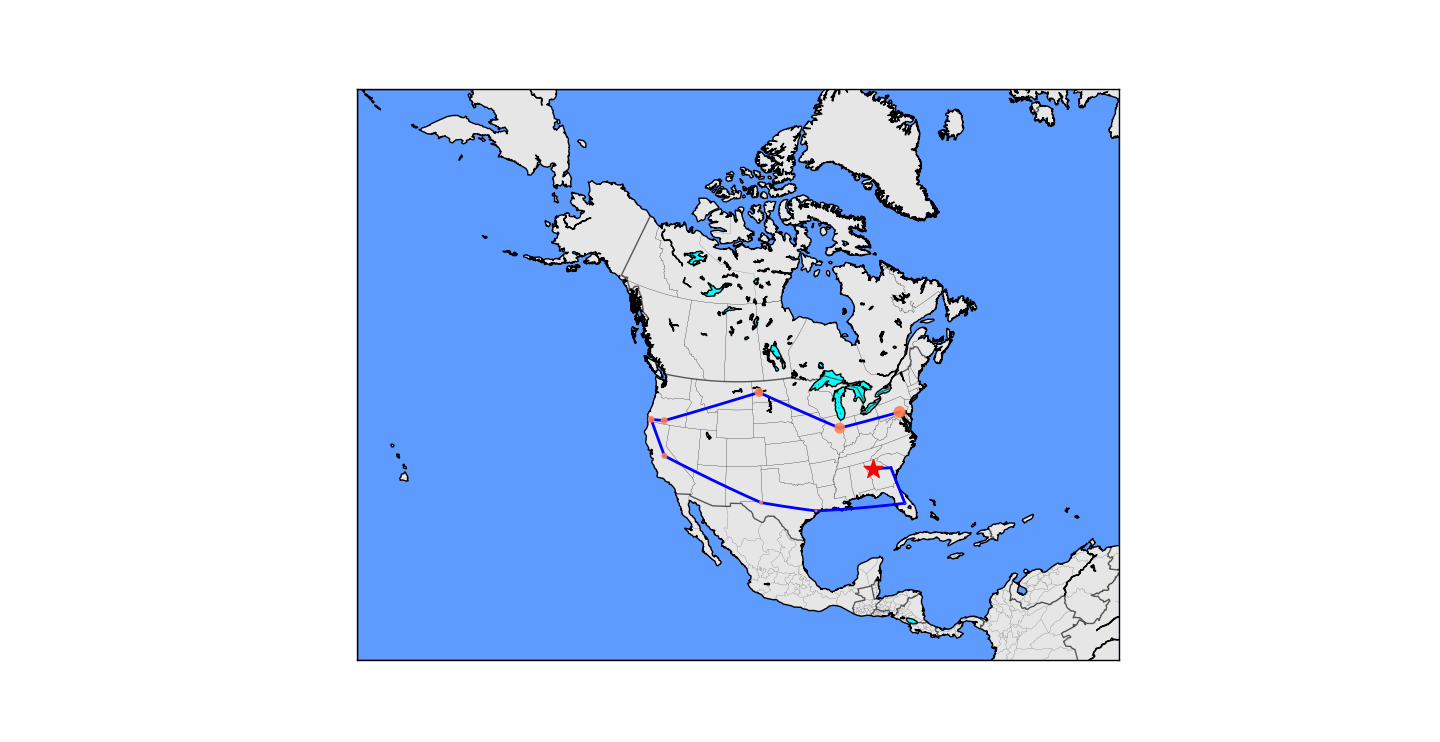}}}%
    \caption{Optimal path-TSP versus TFP routes for 10 U.S.\ locations; the areas of the discs at visit locations are proportional to the expected (quadratic) damage at each location, due to the delay along the route. Here, taking the TFP route (instead of the TSP) reduces total damage by $5\%$.}%
    \label{US-routes}%
\end{figure*}

\paragraph{The Argument versus The Objective.} The set of feasible routes for all $L_p$-TSP problems is the same, while the objectives are different. 
For any $p \neq q$, there exist instances where the two optimal routes are different.
As a simple example, depicted by Figure~\ref{fig:L2vsL1TSP}, consider 
four vertices \textbf{S}, \textbf{A}, \textbf{B}, and \textbf{C} over
a line metric at locations $0,-1-\varepsilon,+1$, and $+2$ respectively. Starting at \textbf{S}, the route \textbf{SABC} is optimal for $L_2$-TSP with corresponding objective of $\|(0,1+\varepsilon, 3+2\varepsilon, 4+2\varepsilon)\|_2 \simeq \sqrt{26}$, in contrast to the optimal route for $L_1$-TSP that is \textbf{SBCA} with the objective $\|(0,1,2,5+\varepsilon)\|_1 = 8 + \varepsilon$.

\smallskip

Consider the effect of a ``wrong'' norm on the optimal route. Figure \ref{US-routes} demonstrates an example in which we consider $10$ locations with simultaneous wildfires in the United States.\footnote{Wildfires observed during the first week of Dec $2019$, according to satellite data \cite{NASA}.}
Dispatching a firefighter from Atlanta, and traveling $100$ times faster than the spread of fire, TFP would look quite different than $L_\infty$-TSP, i.e., path-TSP. In particular, a TFP optimal route reduces the total damage by $5\%$ compared to that of the path-TSP.

As another example, consider destinations on the $x$ axis, with many fires located at $+1$ and a single fire at $-1+\varepsilon$. Starting at $x = 0$, the optimal $L_\infty$-TSP route moves left first and then right, resulting in a solution roughly three times as expensive as the optimal TFP route, in terms of the $L_2$ objective. 
In fact, the relative performance of an $L_\infty$-TSP route for the $L_2$ objective can be \emph{unbounded}. For instance, consider the Euclidean metric over the plane with $n-2$ fires located at complex coded locations $$\{e^{2\pi \cdot \frac{k}{n} i} : k \in \{1, \cdots, n-2\}\},$$ and $m$ distinct fires located at $e^{2\pi \frac{n-1-\varepsilon}{n} i}\,.$ Starting at $(1,0) = e^{0i}$ and moving at unit speed, for $n \rightarrow \infty$ and $m/n \rightarrow \infty$ the damage (squared delay) due to $L_\infty$-TSP route (by walking in the wrong direction around the circle) converges to $4\pi^2$, while for the optimal TFP solution it goes to zero.

\medskip

\paragraph{All-norm-TSP.} 
We observed that TSP may not be a good solution for other $L_p$ objectives. One may consider whether a different $L_p$-TSP is hopefully good for all norms. In this line, a natural question is whether there exists a single route that is approximately optimal concerning the minimization of any norm of the visit times, and whether we can efficiently find one.
This can be viewed also as an online problem where the adversary chooses the norm, e.g.\ $L_p$, and the objective is to provide a competitive solution with respect to the optimal route.

\medskip

\begin{definition}[all-norm-TSP]
Given $s, V \ni s, d: V \times V \rightarrow \mathbb{R}_{\ge 0}$ as before, the objective is to choose a route that minimizes the maximum possible ratio between a symmetric norm of the visit time vector of the output route $\sigma$ and the optimal route for that norm,
$$
\min_{\sigma \in \mathcal{F}} \sup_{\| \cdot \|} \frac{ \lVert \ell_{\sigma} \rVert}{\min_{\sigma' \in \mathcal{F}} \lVert\ell_{\sigma'}\rVert}\,.
$$
\end{definition}

Golovin et al.\ \cite{GGKT08} introduced this problem as the \emph{all-norm-TSP} and gave an algorithm that outputs a route that is a $16$-approximation with respect to any norm of the visit time vector. The concept of all-norm minimization has been of interest in many applications, e.g., for routing, load balancing \cite{KRT01}, and machine scheduling \cite{AERW04,BP03}.

\subsection{Main Results \& Proof Ideas.}
Optimizing the non-linear objective of $L_p$-TSP can be computationally challenging. The problem is already NP-hard even in the linear case, $p = 1$, on a tree metric \cite{S02}, i.e.\ when the metric is pairwise distances over a graph on $V$ that forms a tree. In contrast, TSP on trees is solvable, in linear time. 

Archer and Williamson \cite{AW03} showed that a $(1+\varepsilon)$-approximate solution for TRP exists that is a concatenation of $O(\log n \cdot \varepsilon^{-1})$ TSP paths. This can be generalized as the following Lemma, which enables a quasipolynomial time approximation scheme for weighted trees.
\begin{lemma}[\cite{AW03}]
$L_p$-TSP can be $(1+\varepsilon)$-approximated by a concatenation of $O(\log n \cdot \varepsilon^{-1})$ TSP paths.
\end{lemma}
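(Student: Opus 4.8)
\emph{Proof proposal.} The plan is to start from an optimal route $\Opt$ and exhibit a nearby route that is a concatenation of a few \emph{fixed-endpoint} TSP paths (path-TSP instances in which both endpoints are prescribed). Write $L = \ell^{\Opt}_{\max}$ and set the threshold $t_0 = \varepsilon L / n$. I would sort the vertices into geometric \emph{time buckets} by their $\Opt$-latency $\ell^{\Opt}_v$: put $V_0 = \{v : \ell^{\Opt}_v \le t_0\}$ and, for $j \ge 1$, $V_j = \{v : t_0(1+\varepsilon)^{j-1} < \ell^{\Opt}_v \le t_0(1+\varepsilon)^j\}$, carrying $j$ up to $J := \lceil \log_{1+\varepsilon}(L/t_0)\rceil$, at which point $t_0(1+\varepsilon)^J \ge L$ and hence $V_0 \cup \dots \cup V_J = V$. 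Since $L/t_0 = n/\varepsilon$, we get $J = O(\varepsilon^{-1}\log n)$. For each $j$ let $w_j$ be the last vertex $\Opt$ visits by time $t_0(1+\varepsilon)^j$ (so $\ell^{\Opt}_{w_j} \le t_0(1+\varepsilon)^j$), and let $w_0$ be $\Opt$'s last vertex with latency at most $t_0$.

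The route $\Alg$ I would build is the concatenation $P_0, P_1, \dots, P_J$, where $P_0$ is a shortest Hamiltonian path from $s$ to $w_0$ on $\{s,w_0\}\cup V_0$, and for $j \ge 1$ the piece $P_j$ is a shortest Hamiltonian path from $w_{j-1}$ to $w_j$ on $\{w_{j-1},w_j\}\cup V_j$; each $P_j$ is a path-TSP instance with both endpoints fixed. Because $P_j$ ends exactly where $P_{j+1}$ begins, the pieces stitch into one legal route that starts at $s$ and visits every vertex, and it is a concatenation of $J+1 = O(\varepsilon^{-1}\log n)$ TSP paths, as required. (Once this existence statement is in hand, the algorithmic version would follow by solving these path-TSP subproblems — exactly on trees, or within $1+\varepsilon$ in Euclidean metrics — after guessing the bucket data.)

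For the cost bound, the key observation is that the sub-route of $\Opt$ from $w_{j-1}$ to $w_j$ is a walk from $w_{j-1}$ to $w_j$ that already covers $V_j$, so $|P_j| \le \ell^{\Opt}_{w_j} - \ell^{\Opt}_{w_{j-1}}$ for $j\ge 1$ and $|P_0| \le \ell^{\Opt}_{w_0}$. Summing, the time at which $\Alg$ finishes segment $j$ \emph{telescopes}: $\sum_{i\le j}|P_i| \le \ell^{\Opt}_{w_j} \le t_0(1+\varepsilon)^j$. Consequently every $v \in V_j$ with $j \ge 1$ is visited by time $t_0(1+\varepsilon)^j$, i.e.\ $\ell^{\Alg}_v < (1+\varepsilon)\,\ell^{\Opt}_v$, while every $v \in V_0$ has $\ell^{\Alg}_v \le t_0$. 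Therefore $\|\ell^{\Alg}\|_p^p \le |V_0|\,t_0^p + (1+\varepsilon)^p\|\ell^{\Opt}\|_p^p \le n\,t_0^p + (1+\varepsilon)^p\|\ell^{\Opt}\|_p^p$; since $t_0 = \varepsilon L/n \le \varepsilon\|\ell^{\Opt}\|_p/n$ and $p\ge 1$ the first term is at most $\varepsilon^p\|\ell^{\Opt}\|_p^p$, giving $\|\ell^{\Alg}\|_p \le (1+2\varepsilon)\|\ell^{\Opt}\|_p$; rescaling $\varepsilon$ yields the lemma. (The case $p=\infty$ is immediate, as a single path-TSP solution already minimizes $\|\ell\|_\infty$.)

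The step I expect to be the crux is getting the overhead of the concatenation down to a $1+\varepsilon$ factor rather than a constant (or $\varepsilon^{-1}$) factor. The naive approach — cover $V_j$ by walking back to $s$ and running an $s$-rooted TSP path — incurs a detour of order $t_0(1+\varepsilon)^j$ per bucket, and these geometric terms sum to $\Theta(\varepsilon^{-1}\,t_0(1+\varepsilon)^j)$, an $\varepsilon^{-1}$ blow-up; so the hard part is avoiding that. Insisting that $P_j$ \emph{terminate} at $\Opt$'s own position $w_j$ is precisely what makes the segment lengths telescope against $\Opt$'s latencies and kills the overhead — this is why fixed-endpoint path-TSP, rather than rooted path-TSP, is the right primitive here. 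The only other subtlety is that vertex latencies can have unbounded aspect ratio; I handle this by collapsing all sub-$t_0$ latencies into the single bucket $V_0$ with $t_0 = \Theta(\varepsilon L/n)$, which simultaneously caps the number of buckets at $O(\varepsilon^{-1}\log n)$ and makes $V_0$'s contribution to the $L_p$ objective negligible.
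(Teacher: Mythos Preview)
Your proof is correct and follows essentially the same approach as the paper: bucket the optimal latencies geometrically with ratio $1+\varepsilon$, replace each bucket's sub-route of $\Opt$ by a shortest fixed-endpoint TSP path, and use the telescoping of segment lengths to conclude that every visit time grows by at most a $1+\varepsilon$ factor. The only cosmetic difference is in bounding the aspect ratio --- the paper first quantizes edge lengths to be integral and $\tilde O(n^2)$, whereas you collapse all sub-$t_0$ latencies into a single bucket with $t_0 = \varepsilon L/n$; both devices yield $O(\varepsilon^{-1}\log n)$ buckets.
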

\begin{proof}
Without loss of generality we can assume integral and polynomially bounded input distances, $d(i,j) \in \set{0} \cup [O(n^2/\varepsilon)] = \set{0} \cup [\tilde{O}(n^2)]$,  as an appropriate quantization by rounding only adds a multiplicative error of $1 + O(\varepsilon/n^2)$ to any edge and keeps any norm of a valid tour, within a factor $(1 \pm \varepsilon)$. 

\smallskip

Now break the optimal route according to time spots $1, (1+\varepsilon), \cdots, (1+\varepsilon)^\gamma$ where $\gamma = O(\log n \cdot \varepsilon^{-1})$ because $O(n^3 \cdot \varepsilon^{-1})$ is a bound on the length of the optimal route. Replacing each sub-route between two consecutive time spots, with the shortest path TSP over the same points does not increase any visit time beyond a factor $1+\varepsilon$, and hence preserves any norm of the visit times by a factor of $1+\varepsilon$.
\end{proof}

The above approach can only lead to a pseudo-polynomial time approximation algorithm, even for trees. 
Reducing the problem to many shortest paths cannot lead to an efficient polynomial-time reduction, because a concatenation of $o(\log n)$ path-TSP routes cannot approximate $L_1$-TSP within a constant factor \cite{S14}.
To further reduce the number of paths, we use the notion of segmented-TSP, introduced by Sitters \cite{S14}, to enable some dependence between consecutive deadlines. This problem requires a (sequence of monotonically non-decreasing) number of destinations to be visited by a number of deadlines, formulated as follows.

\begin{definition}[segmented-TSP]
Given $V \ni s, d: V \times V \rightarrow \mathbb{R}_{\ge 0}$ as before, in addition to integer numbers $n_1 \leq n_2 \leq \cdots \leq n_k \leq |V| = n$ and fractional numbers $t_1 \leq t_2 \leq \cdots \leq t_k$ as inputs, the segmented-TSP problem is a decision problem to verify whether a route exists that visits at least $n_i$ distinct vertices by time  $t_i$ for all $i \in [k]$, starting at $s$. 
\end{definition}

Approximation of segmented-TSP, i.e.\ a decision problem, can be defined as follows.

\begin{definition}
An $\alpha$-approximate solution to a segmented-TSP instance, must visit the first $n_i$ vertices by the modified deadline $\alpha \cdot t_i$, $\forall i \in [k]$, if an answer to the original segmented-TSP exists.
\end{definition}

We generalize a main result of Sitters \cite{S14} that showed TRP can be reduced to (a polynomially many number of approximate) segmented-TSP problems with a constant number of deadlines.

\begin{restatable}{thm}{reductiontheorem}[$L_p$-TSP $\Rightarrow$ segmented-TSP]
\label{thm:reduction}
Let $\varepsilon > 0$ be a constant and $\mathcal{A}$ be an $\alpha$-approximation algorithm for segmented-TSP for some $k = O(1+\varepsilon^{-2})$ of our choice. There is a $(1+\varepsilon) \cdot \alpha$-approximation algorithm for $L_p$-TSP that calls $\mathcal{A}$ (on the same network $(V,s,d)$) for a strongly polynomial number of times.
\end{restatable}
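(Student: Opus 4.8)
My plan is to use the \emph{visit-count profile} of an optimal route --- the numbers of vertices it has reached by a short list of times --- as the deadline sequence of a single segmented-TSP instance, to argue that any approximate solution of that instance is a near-optimal $L_p$-TSP route, and then to work to keep the list short ($O(1+\varepsilon^{-2})$ times) while replacing exact knowledge of $\Opt$'s profile by a polynomial-size search.

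First I would preprocess exactly as in the Lemma of~\cite{AW03} above: rescale and round the metric so that distances are integral and polynomially bounded, which perturbs any norm of any route by a factor $1\pm\varepsilon$ and caps every visit time by $\tilde{O}(n^3)$; after normalizing, only $O(\log n\cdot\varepsilon^{-1})$ geometric time scales $1,(1+\varepsilon),(1+\varepsilon)^2,\dots$ are relevant.

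Next comes a coarse template. Fix an optimum $\Opt$, let $t_1<\cdots<t_k$ be all those scales, and let $n_i$ be the number of vertices $\Opt$ has visited by $t_i$. Then $\Opt$ is feasible for the segmented-TSP instance $(n_i,t_i)_i$; conversely, for any $\alpha$-approximate solution $\sigma$, the vertex of $\Opt$-visit rank $r$ satisfies $T^\sigma_r\le\alpha t_i$ where $t_i$ is the smallest scale above that vertex's $\Opt$-visit time, so $T^\sigma_r\le\alpha(1+\varepsilon)T^{\Opt}_r$ for every $r$ and hence $\|\ell^\sigma\|_p\le\alpha(1+\varepsilon)\|\ell^{\Opt}\|_p$. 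This already proves a version of the theorem, but with $k=O(\log n\cdot\varepsilon^{-1})$ deadlines and assuming the profile $(n_i)_i$ is known --- both defects must be removed.

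The crux, which I expect to be the main obstacle, is compressing to $k=O(1+\varepsilon^{-2})$ deadlines using only polynomially many calls, generalizing Sitters~\cite{S14}. The only bound available for a sparse instance is the weaker $\|\ell^\sigma\|_p^p\le\alpha^p\sum_i(n_i-n_{i-1})t_i^p$ (each of the $n_i-n_{i-1}$ vertices entering between deadlines $i-1$ and $i$ is charged the full $\alpha t_i$), and for this alone to stay within $(1+\varepsilon)^p\|\ell^{\Opt}\|_p^p$ one seemingly needs $\Omega(\log n)$ levels; so the reduction must both sparsify cleverly and exploit that an actual segmented-TSP route --- which may interleave work across windows rather than behave like a concatenation of path-TSP routes --- does much better than that charging bound. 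I would (i) drop the scales before $\Opt$ has accrued an $\varepsilon$-fraction of $\|\ell^{\Opt}\|_p^p$, folding them into one leading deadline, (ii) among the surviving scales repeatedly merge a scale into its successor when $\Opt$'s count grows across it by less than $1+\varepsilon$ or its own contribution is below an $\varepsilon/k$ share, verifying each merge costs a $1+O(\varepsilon)$ factor, preserves $\Opt$-feasibility, and leaves $O(1+\varepsilon^{-2})$ deadlines, and (iii) couple the charging bound with the feasibility of $\Opt$ on the \emph{same} dependent instance to absorb the interleaving slack --- this is precisely the leverage by which an $O(1)$-deadline segmented-TSP beats the $\Omega(\log n)$ lower bound for independent path-TSP calls of~\cite{S14}, and carrying it through is the heart of the proof. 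Finally, since after sparsification the instance is pinned down by $\|\ell^{\Opt}\|_p$ together with $O(1+\varepsilon^{-2})$ anchor pairs, each ranging over polynomially many post-quantization values, I would enumerate the candidate instances, call $\mathcal{A}$ on each, and return the best route found.
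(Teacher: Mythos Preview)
Your step (iii) is where the argument breaks. You correctly note that the charging bound $\sum_i(n_i-n_{i-1})t_i^p$ by itself needs $k=\Omega(n\varepsilon)$ deadlines when $\Opt$ has one visit at each scale $(1+\varepsilon)^i$, and that your merge rules (i)--(ii) never fire there. You then appeal to ``interleaving slack'' --- that a genuine segmented-TSP route on the actual metric might do better than the charge --- but this is not a proof: the black-box $\mathcal{A}$ is only contracted to satisfy $T^\sigma_{n_i}\le\alpha t_i$, and any analysis that must hold for \emph{every} such $\mathcal{A}$ is stuck with exactly the charging bound. Enumerating global $O(\varepsilon^{-2})$-deadline instances does not close the gap, because what you need is that some candidate's \emph{worst} $\alpha$-approximate solution is good, and you have offered no argument for that beyond the bound you already acknowledged is insufficient.

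The paper takes a genuinely different route. It first proves a structural lemma: draw $j$ uniformly from $\{0,\dots,k-1\}$ with $k=O(\varepsilon^{-2})$, set $\lambda_i=(1+\varepsilon)^{ki-j}$, and let $\Opt'$ be the route that, for each $i$, traverses the length-$\lambda_i$ prefix of $\Opt$, \emph{returns to $s$}, and waits until $3\lambda_i$; then $\expec{j}{\|T^{\Opt'}\|_p^p}\le(1+\varepsilon)\|T^{\Opt}\|_p^p$, the random shift amortizing the return overhead per vertex. This return-to-origin structure decouples the horizon into $\tilde{O}(n^2)$ coarse phases of ratio $c=(1+\varepsilon)^k\ge 3$, each containing exactly $k$ fine $(1+\varepsilon)$-scales. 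The algorithm is then a dynamic program $D[i][d]$ over phase index $i$ and cumulative count $d$: each transition guesses the $k$ within-phase increments $(m_1,\dots,m_k)$ and makes one call to $\mathcal{A}$ on a $(k{+}1)$-deadline instance covering only that phase. So the $O(\varepsilon^{-2})$ deadlines live inside a single phase; the many phases are chained by the DP, not by one global segmented-TSP call. The ideas you are missing are this random-shift/return-to-origin lemma and the phase-wise dynamic program it enables.
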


\smallskip

Theorem~\ref{thm:reduction}, proved in Section~\ref{sec:complx}, along with a PTAS for segmented-TSP \cite{S14} on tree metrics and Euclidean metrics imply the following results.

\begin{restatable}{corollary}{PTAScorollary}\label{thm:PTAS}
There exist polynomial-time approximation schemes (PTAS) for $L_p$-TSP on weighted trees and Euclidean plane metrics.
\end{restatable}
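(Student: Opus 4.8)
The plan is to derive both approximation schemes as an essentially immediate consequence of Theorem~\ref{thm:reduction} combined with Sitters' polynomial-time approximation schemes for segmented-TSP \cite{S14}. Given a target accuracy $\varepsilon' > 0$, I would first fix a constant $\varepsilon = \varepsilon'/3$ so that $(1+\varepsilon)^2 \le 1 + \varepsilon'$. Applying Theorem~\ref{thm:reduction} with this $\varepsilon$ reduces an $L_p$-TSP instance on a weighted tree (respectively, in the Euclidean plane) to a strongly polynomial number of segmented-TSP instances on the \emph{same} network $(V,s,d)$, each carrying only $k = O(1+\varepsilon^{-2})$ deadlines; since $\varepsilon$ is a constant, $k$ is a constant. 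This costs a multiplicative factor of $1+\varepsilon$ in the objective.

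Next I would instantiate the black-box algorithm $\mathcal{A}$ required by Theorem~\ref{thm:reduction} with Sitters' PTAS for segmented-TSP on trees and in the plane. For the constant number of deadlines $k$ that we have committed to, this PTAS runs in polynomial time and returns an $\alpha = (1+\varepsilon)$-approximate solution in the sense of the approximation definition above, i.e.\ it visits the first $n_i$ vertices by the relaxed deadline $(1+\varepsilon)\,t_i$ for every $i \in [k]$ whenever a feasible route exists. Feeding this $\mathcal{A}$ into Theorem~\ref{thm:reduction} yields a route whose $L_p$ norm is within $(1+\varepsilon)\cdot\alpha = (1+\varepsilon)^2 \le 1+\varepsilon'$ of optimal. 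The total running time is the strongly polynomial number of calls made by the reduction times the polynomial per-call running time of Sitters' scheme, hence polynomial for every fixed $\varepsilon'$, which is exactly a PTAS.

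The substance of the argument — and the place where I expect the only real work to lie — is bookkeeping of the interface between the two ingredients. One has to check that Sitters' segmented-TSP PTAS is stated for (or directly extends to) the exact variant used here (at least $n_i$ \emph{distinct} vertices visited by time $t_i$, with deadlines relaxed multiplicatively rather than additively), and that its running time is genuinely polynomial in the input size for the specific constant $k = O(1+\varepsilon^{-2})$, i.e.\ that the dependence on the number of deadlines does not cause a super-polynomial blow-up. One should also verify that the preprocessing invoked inside Theorem~\ref{thm:reduction} (for instance, quantizing the distances to be integral and polynomially bounded, as in the proof of the Archer--Williamson lemma) preserves tree and Euclidean structure, so that each instance passed to $\mathcal{A}$ is still a legitimate tree, respectively Euclidean-plane, instance to which Sitters' PTAS applies. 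Once these compatibility checks are in place, the composition is routine and the corollary follows.
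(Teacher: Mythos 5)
Your proposal is correct and follows the same route as the paper: the corollary is obtained by plugging Sitters' segmented-TSP results (exact polynomial-time on weighted trees, $(1+\varepsilon)$-approximate on the Euclidean plane, for any constant number of deadlines) into Theorem~\ref{thm:reduction} as the black-box $\mathcal{A}$, and rescaling $\varepsilon$ to absorb the $(1+\varepsilon)\cdot\alpha$ composition. The compatibility checks you flag are sensible diligence, but the paper treats the composition as immediate, exactly as you do.
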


Note that due to our reduction of $L_p$-TSP to segmented-TSP, any approximation results for segmented-TSP, on specific metrics, would follow for $L_p$-TSP, at the cost of an additional factor of $(1+\varepsilon)$ to the approximation bound. Our results can be generalized to the case of multiple travelers, starting from arbitrary locations, as discussed in Section~\ref{sec:multi}.

\smallskip

For general metrics, a PTAS is unlikely, as the problem becomes Max-SNP-hard. On the other hand, the constant factor approximability of $L_p$-TSP for general metrics is immediate due to the $16$-approximation of the all-norm-TSP by Golovin et.\ al.\ \cite{GGKT08}. In this line, we improve the approximation bound for all-norm-TSP by a factor of $2$.

\begin{restatable}{thm}{allnormapproximation}\label{thm:8apx}[all-norm-TSP]
There is a polynomial-time algorithm to find a route that is $8$-approximate with respect to the minimization of any symmetric norm of the visit times (including $L_p$-TSP, TFP, and TRP).
\end{restatable}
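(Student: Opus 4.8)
The plan is to first recast the all-norm requirement as a single combinatorial condition on the prefixes of one route. For $k\in[n]$ let $M_k$ denote the minimum, over all feasible routes, of the time to have visited $k$ vertices; equivalently, $M_k$ is the optimum of the rooted $k$-\emph{stroll} problem on $(V,s,d)$, i.e.\ the shortest length of an $s$-rooted path visiting at least $k$ vertices (so $0=M_1\le M_2\le\cdots\le M_n$). For \emph{any} route $\sigma$, its prefix up to the instant it has visited its $k$-th vertex is itself an $s$-rooted path through $k$ vertices of length $T^\sigma_k$, hence $T^\sigma_k\ge M_k$; so the sorted visit-time vector of every route coordinatewise dominates $(M_1,\dots,M_n)$. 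Since the symmetric norms in question are monotone on $\mathbb{R}^n_{\ge 0}$ (as are all $L_p$ norms and all top-$k$ norms), for any such norm $\lVert\cdot\rVert$ with optimal route $\tau$ we get $\lVert\ell^\tau\rVert\ge\lVert(M_1,\dots,M_n)\rVert$. Consequently it is enough to construct in polynomial time a single route $\sigma$ with
\begin{equation*}
T^\sigma_k\le 8\,M_k\qquad\text{for every }k\in[n],
\end{equation*}
because then $\lVert\ell^\sigma\rVert=\lVert(T^\sigma_1,\dots,T^\sigma_n)\rVert\le 8\lVert(M_1,\dots,M_n)\rVert\le 8\lVert\ell^\tau\rVert$ for all symmetric norms at once. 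The constant $8$ enters only here.

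\paragraph{Building the route.}
Second, I would invoke a constant-factor approximation oracle for the $k$-stroll problem --- on input $k$ it returns an $s$-rooted path visiting $\ge k$ vertices of length at most $\gamma M_k$, the best known $\gamma$ being close to $2$ (a weaker bound follows by doubling a $k$-MST). After rounding distances to polynomially bounded integers as before, the relevant length scales form a geometric sequence $B_0<B_1<\cdots<B_J$ with only polylogarithmically many terms. For each $B_j$ I would threshold the oracle over $k$ to obtain a path $P_j$ of length $\le B_j$ that visits at least as many vertices as any route can within length $B_j/\gamma$, and then concatenate $P_0,\dots,P_J$ into one route, inserting between consecutive pieces a transit segment back toward $s$ (of length at most that of the previous piece) so the result is a legal route from $s$. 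Fixing $k$ and taking the first $m$ with $B_m/\gamma\ge M_k$, this route has visited $\ge k$ vertices by the time it finishes $P_m$; telescoping the geometric lengths bounds that time by $O(1)\cdot B_m=O(1)\cdot\gamma\cdot M_k$. Choosing the geometric ratio to balance re-traversal overhead against the loss from having scales only at powers of the ratio, together with the sharpest $k$-stroll oracle, is what should push the product of constants down to $8$.

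\paragraph{The main obstacle.}
The hard part is precisely that last step, since the naive ``retrace to $s$ between pieces'' scheme is too lossy: even with a $2$-approximate $k$-stroll it gives only about $(3+2\sqrt2)\gamma\approx 11.7$, and switching to geometric \emph{counts} instead of geometric \emph{budgets} is outright incorrect --- a single very distant vertex hidden behind a cheap cluster would force a long detour well before it is cheap to make one. Reaching $8$ appears to require a concatenation with a genuinely small per-scale overhead: rather than restarting each $P_j$ from $s$, one wants later pieces to \emph{extend} earlier ones, e.g.\ by replacing the $k$-stroll oracle with an orienteering / min-excess-type oracle that gains many new vertices within a bounded budget starting from the current endpoint, so transit is charged against progress rather than paid again. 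Remaining details --- pre-visiting the vertices with $M_k=0$ for free, and checking the polynomial running time --- are routine. Finally, running the same construction but optimized for a single fixed norm (trading the ratio against the granularity loss using that norm's curvature) is what yields the sharper $5.65$-approximation for TFP mentioned in the introduction.
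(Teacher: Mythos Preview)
Your reduction to the prefix condition $T^\sigma_k\le 8\,M_k$ and the geometric-concatenation framework are exactly right, and they match the paper. The gap is only in the final accounting, and the missing idea is simpler than the orienteering/min-excess machinery you conjecture.

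The paper does \emph{not} use a $\gamma$-approximate $k$-stroll followed by a separate return to $s$. Instead it uses a \emph{good $k$-tree}: a tree rooted at $s$ on $k$ vertices whose total edge weight is at most the optimal $k$-stroll value $M_k$ (so effectively $\gamma=1$ at the tree level). Such a tree is obtainable in polynomial time via the Lagrangian-preserving primal-dual method for $k$-MST, with the dual compared against the $k$-stroll LP. A depth-first traversal of this tree is a \emph{closed} walk from $s$ of length at most $2M_k$; the return to $s$ is already paid for inside the doubling. Thus at tree-budget $2^i$ one obtains a closed sub-tour $C_i$ of length at most $2\cdot 2^i$, and concatenation needs no transit segments at all.

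If $M_k\in[2^i,2^{i+1})$ then the good $k$-tree has length $<2^{i+1}$, so $C_{i+1}$ already contains $k$ vertices, and
\[
T^\sigma_k \;\le\; \sum_{j=0}^{i+1} |C_j| \;\le\; \sum_{j=0}^{i+1} 2\cdot 2^j \;<\; 2^{i+3} \;\le\; 8\,M_k.
\]
In your notation this is the closed-tour version of your telescoping sum: $T_k\le\sum_{j\le m}B_j\le \tfrac{r}{r-1}B_m$ with $B_m<2rM_k$, giving $\tfrac{2r^2}{r-1}$, minimized at $r=2$ to yield $8$. Your $(3+2\sqrt 2)\gamma\approx 11.7$ arose solely because you charged an extra $B_j$ for returning to $s$ on top of a $2$-approximate open stroll; with good $k$-trees that charge disappears. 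No min-excess or orienteering oracle is needed.
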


Theorem~\ref{thm:8apx} is proved in Section~\ref{sec:all-norm}.
Our algorithm builds on the partial covering idea, presented as Algorithm \ref{meta-algorithm}, that was pioneered by Blum et al.\ \cite{BCCPRS94} for TRP, and was developed through subsequent studies \cite{GK98, CGRT03, GGKT08,bienkowski2021traveling}.

\makeatletter
\def\BState{\State\hskip-\ALG@thistlm}
\makeatother

\begin{algorithm}
\caption{Routing via Partial Covering}\label{meta-algorithm}
\begin{algorithmic}[1]
\Procedure{Geometric-Covering}{$V, s, d$}
\State Algorithm Parameters: $b \in (0,\infty),c \in (1,\infty)$ 
\State $i \gets 0$
\While{there remains destinations to visit}
        \State \Comment{Conducting sub-tours}
        \State $C_i \gets$ a maximal route of length $\leq b \cdot c^i$. 
        \State Travel through $C_i$ (and return to the origin)
        \State $i \gets i+1$
\EndWhile
\State \Return an ordering $\sigma$ of $V$ according to their (first) visit time through the above loop.
\EndProcedure
\end{algorithmic}
\end{algorithm}

The parameters $b$ and $c$ strongly affect the performance of the algorithm. For all-norm-TSP, we choose $b = \min_{i,j \in V} d(i, j) $ and $c = 2$, as in \cite{GGKT08}. The key difference of our algorithm is in line $6$. Instead of an approximation algorithm for $k$-TSP, i.e.\ a route of minimum length that visits $k$-vertices, we utilize a milder relaxation, which is a tree (instead of a route/stroll) rooted at $s$, including $k$ vertices, and of total length not larger than an optimal $k$-TSP. Such a \emph{good $k$-tree} can be found in polynomial time using the primal-dual method that solves a Lagrangian relaxation of $k$-TSP \cite{CGRT03,ALW08,PS14}.

On the other hand, we provide a first impossibility result for all-norm-TSP, notably beyond known inapproximability bounds for specific $L_p$-TSP problems. The following is proved in Section~\ref{sub:inappx}.
\begin{restatable}{thm}{allnormTSPlowerbound}
There is no approximation algorithm for all-norm-TSP with multiplicative factor better than $1.78$, independent of $P = NP$ or other complexity hypotheses.
\end{restatable}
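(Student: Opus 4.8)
The statement is information-theoretic rather than complexity-theoretic: it suffices to produce, for every $\rho<1.78$, one fixed instance on which \emph{no} permutation $\sigma\in\mathcal{F}$ is $\rho$-approximate simultaneously for all symmetric norms. The plan is to build an explicit family of line instances, to list the finitely many ``shapes'' of visit-time vectors these instances admit, and to show that each shape is badly sub-optimal for at least one carefully chosen norm. Concretely the instance will be the origin together with a constant number of \emph{clusters} (groups of many co-located destinations) on a line: one cluster to the left of the origin and two or three to the right at increasing distances, with cluster sizes chosen as suitable functions of $n$ (one of order $\Theta(n)$, the others $\omega(1)$ but $o(n)$), and with the coordinates and size-ratios left as free parameters. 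The design goal is a genuine multi-scale tension: the outermost lone point penalizes, in an $L_\infty$ sense, any route that first rushes to the dense cluster; the dense far cluster penalizes, in an $L_1$ sense, any route that instead balances to keep the latest visit small; and the intermediate cluster penalizes, at an intermediate ``top-$k$'' scale, whatever route threads between these.

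The only structural fact I need, and only in its easy direction, is that for each $k$ the Ky Fan norm $\mathrm{top}_k(x)=\sum_{i\le k}x^{\downarrow}_i$ is a symmetric norm and is monotone under coordinatewise domination of sorted prefixes (weak majorization). Hence to defeat \emph{every} route it is enough to: (i) enumerate the $O(1)$ ``monotone'' traversals — those that visit the clusters with no superfluous backtracking — and observe that the sorted visit-time vector of any feasible route weakly dominates that of one of them; and (ii) for each monotone traversal exhibit a single witnessing Ky Fan norm (or $L_p$ norm), including $L_\infty$ ($k=1$) and $L_1$ ($k=n$), under which that traversal is at least $1.78$ times worse than the optimum for that norm. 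I never need the converse reduction of all symmetric norms to Ky Fan norms, since I only ever exhibit witnesses.

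For each monotone traversal $\sigma$ I would then compute $\mathrm{top}_k(\ell_\sigma)$ and $\min_{\sigma'}\mathrm{top}_k(\ell_{\sigma'})$ as piecewise-linear functions of $k$; in the limit of large cluster sizes each relevant ratio becomes an explicit rational function of the cluster coordinates and the limiting size-ratios. Typically one route is caught at $k=1$ (its far lone point is served intolerably late), another at $k\approx n$ (its dense cluster is served later than necessary), and the remaining route(s) at an intermediate $k$ comparable to the middle cluster's size (the middle cluster is served late). The last step is to choose the coordinates and size-ratios so as to maximize the minimum of these witnessed ratios, which reduces to solving a small system of equations — a low-degree polynomial in the target ratio — whose solution gives the bound $\ge 1.78$.

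Two points I expect to be delicate. The first is completeness of step (i): one must verify that no backtracking or ``partial'' route on the line yields a sorted visit-time vector that fails to be dominated by one of the enumerated monotone traversals, so that no route can quietly dodge every chosen witness norm; this is geometrically clear but wants a clean proof. The second, and the real content, is that the construction and the optimization are entangled: a two-cluster instance only yields the ratio $3/2$ (balancing $\tfrac{2+t}{t}$ against $\tfrac{2t+1}{2+t}$, equal at $t=4$), so one genuinely needs the extra cluster(s) and the right coupling of positions with the $n$-dependent sizes to push the min-max value up, and pinning down the construction that exactly realizes $1.78$ — rather than some weaker constant — is where the work lies.
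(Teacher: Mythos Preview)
Your framework is reasonable and in spirit close to the paper's, but the construction you commit to and the verification you propose diverge from what the paper actually does, and the gap lies exactly where you flag it.

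The paper does \emph{not} use a constant number of co-located clusters. Its instance is a single point at $-1$ together with $n$ distinct points at $b^i-1$ for $b=1+\varepsilon$, i.e.\ a geometrically spread sequence, and the two norms it pits against one another are just $L_1$ and $L_\infty$. That analytic argument only yields $1.67$; the $1.78$ is then obtained by a computationally tuned $150$-point line instance (listed explicitly in the appendix), on which the authors enumerate the candidate routes and evaluate several norms numerically. So the paper's ``proof'' of $1.78$ is a verified example, not the outcome of an analytic optimization.

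Your proposal differs in two ways. First, you restrict to $O(1)$ clusters of co-located points and plan to push the bound up by adding a middle cluster and invoking an intermediate Ky Fan norm. You correctly observe that two clusters saturate at $3/2$; what is missing is any evidence that three (or a handful of) clusters reach $1.78$. The paper's successful example has many \emph{distinct} locations with graded densities, which is strictly richer than your co-located-cluster model, and the gap between $1.67$ (their best analytic two-norm bound on a spread instance) and $1.78$ (their numerical bound) is exactly the regime you are promising to cover analytically with one extra cluster. You should either carry out that optimization and report the resulting constant, or acknowledge that the proposal as stated is a plan whose achievable constant is undetermined.

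Second, your step (i) is fine for cluster instances (with co-located points the route is determined by the sequence of cluster visits, hence $O(1)$ monotone shapes), and your use of Ky Fan norms as witnesses is cleaner than the paper's ad hoc ``various norms.'' But note that on the paper's spread instance there are not $O(1)$ monotone shapes, so if you end up needing a spread instance to hit $1.78$, your domination argument would have to be redone.

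In short: same theme (line instance, left-vs-right tension, norm witnesses), but the paper's $1.78$ is computational on a many-point spread instance, while your analytic cluster route stops at an unquantified constant.
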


The above result reaffirms the need for approximation algorithms specifically designed for each norm. Along this line, we present a randomized $5.65$ approximation algorithm for the Traveling Firefighter Problem.
\begin{restatable}{thm}{TFPapproximation}\label{thm:TFP}
There is a randomized, polynomial-time $5.65$-approximation algorithm for TFP on general metrics.
\end{restatable}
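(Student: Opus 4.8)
The plan is to reuse the partial‑covering scheme of Algorithm~\ref{meta-algorithm} together with the good‑$k$‑tree subroutine from the proof of Theorem~\ref{thm:8apx}, but with two changes tailored to the $L_2$ objective: the geometric growth factor $c$ is left free (to be optimized at the end rather than fixed to $2$), and a \emph{single random phase offset} $X\sim\mathrm{Unif}[0,1)$ is drawn once, so that in iteration $i$ the algorithm builds and DFS‑traverses (returning to $s$) a good‑$k$‑tree of length at most $\tfrac12\,b\,c^{\,i+X}$ for the largest feasible $k$, where $b=\min\{d(u,v):d(u,v)>0\}$; such a traversal costs at most $b\,c^{\,i+X}$ and already returns to the origin. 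After the standard quantization of distances to integers bounded by $\tilde O(n^2)$ (which changes $\|\cdot\|_2$ by a $1\pm\varepsilon$ factor), there are only $O(\log n)$ iterations, so the procedure is strongly polynomial for every value of $X$.

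The core is a per‑coordinate comparison, essentially the one behind Theorem~\ref{thm:8apx}. Let $O_1\le\cdots\le O_n$ be the sorted visit times of $\Opt$ and $A_1\le\cdots\le A_n$ those of the route $\Alg$ that the algorithm returns. The prefix of $\Opt$ visiting its first $k$ vertices is a route (a path) rooted at $s$ of length at most $O_k$, hence the good‑$k$‑tree the algorithm would use has length at most $O_k$, so once the tree budget $\tfrac12 b c^{\,i+X}$ reaches $O_k$ the algorithm has covered at least $k$ distinct vertices. Writing $\theta_k\le 2O_k$ for the least value of $b c^{\,i+X}$ at which $k$ vertices are covered, the first term of the sequence $(b c^{\,i+X})_{i\ge 0}$ that is at least $\theta_k$ equals $\theta_k c^{V_k}$, where $V_k=\log_c\!\big(b c^{\,i+X}/\theta_k\big)\in[0,1)$ is \emph{marginally uniform} on $[0,1)$ by the choice of $X$; and since the total time elapsed through that iteration (an upper bound on $A_k$) is $\sum_{j\le i} b c^{\,j+X}\le \tfrac{c}{c-1}\,b c^{\,i+X}$, we get $A_k\le \tfrac{c}{c-1}\theta_k c^{V_k}\le \tfrac{2c}{c-1}\,O_k\,c^{V_k}$.

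Squaring, summing over $k$, and using linearity of expectation with $\E[c^{2V}]=\int_0^1 c^{2v}\,dv=\tfrac{c^2-1}{2\ln c}$ (only the marginals of the $V_k$ are needed, not their independence) gives
\[
\E\bigl[\|\ell^{\Alg}\|_2^2\bigr]=\sum_k\E[A_k^2]\le\Bigl(\tfrac{2c}{c-1}\Bigr)^{\!2}\tfrac{c^2-1}{2\ln c}\sum_k O_k^2=\frac{2c^2(c+1)}{(c-1)\ln c}\,\|\ell^{\Opt}\|_2^2,
\]
and then Jensen's inequality (concavity of $\sqrt{\cdot}$) yields $\E\bigl[\|\ell^{\Alg}\|_2\bigr]\le\sqrt{\tfrac{2c^2(c+1)}{(c-1)\ln c}}\ \|\ell^{\Opt}\|_2$. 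It remains to pick $c$ minimizing $g(c)=\tfrac{2c^2(c+1)}{(c-1)\ln c}$ on $(1,\infty)$; a short calculation puts the minimizer near $c\approx 2.5$ with $g(c)\approx 31.8$, so $\sqrt{g(c)}<5.65$, which is the claimed randomized $5.65$‑approximation. Repeating the algorithm $O(\log\tfrac1\delta)$ times and keeping the best route turns this into a high‑probability guarantee, and one can derandomize since only $O(n)$ values of $X$ are combinatorially distinct.

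The main obstacle, and the reason the constant differs both from the $2e$ one would get for the $L_1$ objective and from the $8$ of Theorem~\ref{thm:8apx}, is the nonlinearity of $\|\cdot\|_2$: the per‑coordinate bound cannot be summed inside the norm, so one must route through the second moment $\E\|\ell^{\Alg}\|_2^2$ (hence the factor $\E[c^{2V}]$ instead of $\E[c^{V}]$) and only then apply Jensen — which is exactly what moves the optimal growth rate off $c=2$ and $c=e$ and produces $5.65$. The other point requiring care is checking that the constants extracted from the tree traversal (the factor $2$ in $\theta_k\le 2O_k$ and the factor $\tfrac{c}{c-1}$ from the geometric sum) are the tight ones, since the final numerical optimum is sensitive to them; for instance, not returning to $s$ in the last iteration, or a cleverer single‑pass traversal of the $k$‑tree, could shave the constant slightly further.
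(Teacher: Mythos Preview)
Your proof is correct and follows essentially the same route as the paper: partial covering with good-$k$-trees, a uniformly random multiplicative phase (the paper writes it as $b=c^{U}$ with $U\sim\mathrm{Unif}[0,1]$, you write it as an exponent offset $X$), a per-index bound $A_k\le \frac{2c}{c-1}O_k\,c^{V_k}$ with $V_k$ marginally uniform, and then optimizing $g(c)=\frac{2c^2(c+1)}{(c-1)\ln c}$ to get $c\approx 2.54$ and $\sqrt{g(c)}<5.65$. The one visible difference is that the paper additionally reverses each DFS sub-tour with probability $\tfrac12$ and uses convexity of $x\mapsto x^2$ to average the ``start-of-tour'' and ``end-of-tour'' positions of the $k$-th vertex, whereas you simply bound $A_k$ by the end of the covering sub-tour; your cleaner bound already yields the identical constant $\frac{2c^2(c+1)}{(c-1)\ln c}$, so the reversal step is not needed for the stated $5.65$ factor.
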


Theorem~\ref{thm:TFP} is proved in Section~\ref{sec:TFP}, for which we adapt the ideas by Chaudhuri et.\ al.\ \cite{CGRT03} and optimize the parameter $c$ in line $2$ of Algorithm \ref{meta-algorithm}. Choosing $b \in [1,c]$ at random, with a distribution of uniform density for $\log(b)$, simplifies the analysis, while one can efficiently de-randomize the algorithm by quantization of $b$.

\subsection{Literature Review.}

Traveling Salesman Problem is a principal problem in computer science, combinatorial optimization, and operations research, and its first formulations date back as early as $19$\textsuperscript{th} century (c.f.\ \cite{ABCC06}). Since the celebrated $3/2$-approximation algorithm of Christofides-Serdyukov \cite{Chr76,Ser78}, for its tour-variant on general metrics, TSP has been extensively studied for half a century \cite{Wol80, SW90, BP91, Goe95, CV00, GLS05, BC11, SWZ12, HNR19}. Very recently, Karlin, Klein, and Oveis Gharan \cite{KKO20} showed TSP can be approximated strictly better than $3/2$, while the problem remains NP-hard to approximate within a factor of $123/122$ \cite{KLS15}.

The common ground in numerous variants of TSP is that a set of vertices are to be visited in the \emph{fastest possible} way, i.e., optimizing the time spent by the server/traveler.
In contrast, 
the Traveling Repairman Problem (TRP), a.k.a.\ Minimum Latency Problem, the school bus driver problem \cite{will1994extremal},
hidden treasure \cite{BCCPRS94,koutsoupias1996searching,ausiello2000salesmen}, and the deliveryman problem \cite{minieka1989delivery,fischetti1993delivery,mendez2008new}, optimizes the route purely from the perspective of clients, i.e., the total waiting time to be visited, and is another extensively studied combinatorial optimization problem \cite{ACPPP86, papadimitriou1993traveling, BCCPRS94, GK98, CGRT03, AW03} with a state-of-the-art approximation factor of $\simeq 3.59$ for general metrics \cite{CGRT03,PS14}.

\smallskip

Containment of fires can be abstracted from various perspectives. Hartnell \cite{hartnell1995firefighter} modeled a constant speed spread of fires through edges of a graph, along which the firefighters also displace. Many objectives, such as minimization of the number of burned vertices, or the required time to contain the fire(s) are studied in this model and the problem is an active area of research. See \cite{finbow2009firefighter, klein2014approximation,adjiashvili2018firefighting,amir2020firefighter,deutch2021multi} and references therein. 

\smallskip

Generalizing the objective to Minkowski norm of the solution has allowed interpolating other classical problems, e.g., $L_p$ set cover problem \cite{GGKT08,BBFT20} that further united the greedy algorithms for set-cover and the minimum-sum-set-cover \cite{FLT04} problems. Set cover is better approximable for $p = 1$ than $p = \infty\,,$ while TSP ($p = \infty$) is currently better approximated than TRP ($p = 1$). Nevertheless, this order is not expected to be reversed as TRP is intrinsically a harder problem.
Moreover, in contrast to the concordance among $L_p$ set cover problems, for which the same greedy algorithm gives best (possible) bounds for any $p \in [1, \infty)$, state-of-the-art algorithms for TSP and TRP are significantly different, and potentially far from the best possible. This is yet another motivation to study $L_p$-TSP, ultimately towards unified best algorithms for all underlying problems.

\smallskip

\subsection{Paper Organization.} In Section~\ref{sec:complx} we provide a reduction from $L_p$-TSP to segmented-TSP and subsequent approximation schemes for Euclidean and weighted-tree metrics. In Section~\ref{sec:all-norm}, we present the $8$-approximation for all-norm-TSP in general metrics, along with a first inapproximability bound. This will also be a preliminary for optimized algorithm for $L_2$-TSP in Section~\ref{sec:TFP}, where we present a $5.65$-approximation for the Traveling Firefighter Problem on general metrics. We discuss generalizations to multiple vehicle scenarios in Section~\ref{sec:multi} and conclude the paper with a set of interesting open problems to continue this line of research.

\section{Reducing $L_p$-TSP to Segmented TSP}\label{sec:complx}

In this section, we provide our reduction from $L_p$-TSP to polynomially many instances of segmented-TSP. Let us restate Theorem \ref{thm:reduction}.

\reductiontheorem*

In particular, a corollary of the above Theorem (and the following Lemma) is a $(1+\varepsilon)$ approximation algorithm for any $L_p$-TSP on weighted-tree metrics -where the problem becomes strongly NP-hard even for $p = 1$ - as well as the Euclidean plane. 

\begin{lemma}[\cite{S14}]
\label{segTSP}
Segmented TSP, for any constant number of segments $M$, can be solved in polynomial time for weighted trees, and $1+\varepsilon$ approximated for unweighted Euclidean metric.
\end{lemma}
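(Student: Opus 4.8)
Both statements reduce to dynamic programming, and the only role of the constant $M$ is to let the DP carry a \emph{coverage vector} $(m_1,\dots,m_M)$ — the number of vertices already credited against each of the $M$ deadlines — in its state; this vector ranges over $O(n^M)$ values, polynomial for constant $M$. So in each metric the task is to design a DP over the geometry of that metric whose interface, besides a coverage vector, stores only polynomially much information about elapsed route length.

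For \emph{weighted trees}, root the tree at $s$ and start from the basic fact that on a tree the set of vertices visited by any prefix of a route induces a subtree containing $s$; hence a feasible route induces nested subtrees $S_1\subseteq\dots\subseteq S_M$ with $|S_i|\ge n_i$, completed by times $t_1\le\dots\le t_M$. I would then prove a structural lemma: there is a feasible route in which, between two consecutive deadlines, the route traverses the ``new'' part of the tree like a depth-first search, entering and leaving each subtree $O(M)$ times in total. Since a depth-first traversal of a rooted subtree from a fixed entry vertex to a fixed exit vertex has length independent of the order in which the children are explored, the time at which each shell $S_i\setminus S_{i-1}$ is completed then depends only on which shell each edge is assigned to and on the exit vertex of each shell. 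One can therefore guess the $M$ vertices occupied by the route at times $t_1,\dots,t_M$ (polynomially many guesses) and run a bottom-up DP over the binarized rooted tree whose state is a tree node together with a coverage vector and a shell index, storing the minimum time consumed so far; transitions splice a child subtree's optimal sub-routes into the parent's route, and feasibility is ``some root state meets every $n_i$ within every $t_i$.''

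For the \emph{Euclidean plane}, I would use Arora's PTAS framework: a randomly shifted quadtree dissection with $O(\varepsilon^{-1}\log n)$ portals per cell boundary, plus the patching/structure theorem giving a $(1+\varepsilon)$-approximate route that crosses each dissection segment $O(\varepsilon^{-1})$ times and only at portals. Run the standard quadtree DP, but augment the interface of each cell with a coverage vector $(m_1,\dots,m_M)$ and with the accumulated route length rounded to multiples of $\varepsilon t_M/\mathrm{poly}(n)$, which is polynomially many values and costs only another $1+\varepsilon$ (permissible here, where $1+\varepsilon$ loss is allowed). Stitching subproblems then enforces both the portal connectivity pattern and the deadline accounting simultaneously.

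The main obstacle in both cases is reconciling the continuous \emph{timing} constraints with these combinatorial simplifications. On trees it is the structural lemma itself: a route can genuinely benefit from re-entering a subtree, but only once a new deadline has passed, so the re-entry count is tied to $M$ — making this precise while keeping the per-shell exit-vertex bookkeeping polynomial is the delicate part. On the Euclidean side the subtlety is that patching lengthens the route and hence \emph{delays} every later vertex; the added length is only an $\varepsilon$ fraction of the total, which is at most $t_M$, so it can be charged to the relaxed deadlines $(1+\varepsilon)t_i$, but when the $t_i$ span very different magnitudes this charging must be carried out scale by scale — grouping deadlines of comparable size and applying the dissection within each scale — which is the technical heart of the argument.
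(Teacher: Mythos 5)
The paper does not prove this lemma at all: it is imported verbatim from Sitters \cite{S14}, so there is no in-paper argument to compare yours against line by line. Judged against the cited source, your outline follows the same general strategy that Sitters actually uses --- an exact dynamic program on the rooted tree whose state carries a coverage vector of size $O(n^M)$, and an Arora-style randomly shifted dissection with an augmented cell interface for the plane --- so the approach is the right one, not a different one.

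That said, what you have written is a plan rather than a proof, and the two places you yourself flag as delicate are precisely where the entire content of the lemma lives. On trees, the statement that the completion time of each ``shell'' depends only on the edge-to-shell assignment and the deadline positions needs the explicit formula for the minimum-length walk from $u_{i-1}$ to $u_i$ covering a prescribed edge set $E_i$, namely $2\,w\bigl(\mathrm{Steiner}(E_i\cup\{u_{i-1},u_i\})\bigr)-d(u_{i-1},u_i)$; note that this involves re-traversing \emph{old} edges to move between the components of $E_i$, so the cost is not a function of the newly assigned edges alone, and your DP state must be set up so that these connector edges are accounted for without double counting across shells. Your proposed structural lemma about $O(M)$ re-entries is a plausible but unproved assertion standing in for this accounting. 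On the Euclidean side, the scale-by-scale charging of the patching cost against the relaxed deadlines $(1+\varepsilon)t_i$ is the technical heart, as you say, and it is asserted rather than carried out: a single dissection at the scale of $t_M$ genuinely fails when $t_1\ll t_M$, and making the multi-scale version consistent (so that the coverage vectors of the per-scale subproblems compose) is nontrivial. Neither issue is a wrong turn, but both would have to be discharged before this counts as a proof rather than a credible reconstruction of \cite{S14}.
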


\PTAScorollary*

In the rest of this section, we prove Theorem~\ref{thm:reduction}, by providing a dynamic programming algorithm that approximates $L_p$-TSP using polynomially many calls to (approximate) segmented-TSP. More precisely, we show if there is an $\alpha$-approximate solution for segmented-TSP, the dynamic program guarantees an approximation factor of at most $\alpha\cdot(1+\varepsilon)$ for arbitrary constant $\varepsilon > 0\,.$

\smallskip

The algorithm is presented in a few steps, each imposing no more than $1+O(\varepsilon)$ multiplicative error. To achieve exact $1+\varepsilon$ precision, one may run the algorithm for a constant fraction of the target $\varepsilon\,.$

\smallskip

For some $k$ as large as $ O(1+\varepsilon^{-2})$ we can ensure $c \defeq (1+\varepsilon)^k \ge 3\,.$ Let $\Opt^{\lambda_i}$ denote the maximal prefix of $\Opt$ route for $L_p$-TSP of length at most
$$
\lambda_i \defeq (1+\varepsilon)^{-j} \cdot c^i\,, \quad \forall i \ge 0\,,
$$
where $j$ is a fixed random number, uniformly distributed over $\{0, \cdots, k-1\}$.

\smallskip

Let $\Opt'$ be a tour made of sub-tours consisting of traversing $\Opt^{\lambda_i}$ and returning to the origin and waiting until time $3\lambda_i$ before starting the next sub-tour. 
To confirm the above is feasible, we need to show sub-tour $i+1\,,$ being allowed to begin at $3\lambda_i\,,$ does not leave before the return of previous sub-tour, i.e.,
$$
3\lambda_{i-1} + 2 \|T^{\Opt^{\lambda_i}}\|_\infty \leq \lambda_i + 2\lambda_i = 3\lambda_i
$$
which is immediate having $\lambda_i = c \lambda_{i-1} \ge 3\lambda_{i-1}$.

\smallskip

\begin{lemma}\label{lem:loss}
The modified tour $\Opt'$ is approximately optimal in expectation, i.e.,  for some $k \in O(1+\varepsilon^{-2})$ $$\expec{j}{\|T^{\Opt'}\|_p^p} \leq (1+\varepsilon) \|T^\Opt\|_p^p\,.$$
\end{lemma}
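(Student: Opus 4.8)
The plan is to bound, for each vertex $v$, the visit time $\ell^{\Opt'}_v$ in terms of $\ell^{\Opt}_v$, and then show that after taking expectation over the random shift $j$ the $p$-th power of the blow-up factor averages to at most $1+\varepsilon$. Fix a vertex $v$ and let $t = \ell^{\Opt}_v$ be its visit time along the optimal route. Let $i = i(v)$ be the smallest index with $t \le \lambda_i$, so that $v$ is visited during sub-tour $i$ of $\Opt'$ (it lies in the prefix $\Opt^{\lambda_i}$). Since sub-tour $i$ of $\Opt'$ starts at time $3\lambda_{i-1}$ and the prefix $\Opt^{\lambda_i}$ has length at most $\lambda_i$, we get the crude but sufficient bound
$$
\ell^{\Opt'}_v \;\le\; 3\lambda_{i-1} + \lambda_i \;=\; \bigl(1 + \tfrac{3}{c}\bigr)\lambda_i \;\le\; 2\lambda_i,
$$
using $c \ge 3$. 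So the blow-up factor for $v$ is at most $2\lambda_{i(v)}/t$.

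Next I would analyze how $\lambda_{i(v)}/t$ depends on the random shift $j$. Recall $\lambda_i = (1+\varepsilon)^{-j} c^i$ with $c = (1+\varepsilon)^k$, so the values $\{\lambda_i\}_{i,j}$ are exactly the powers $(1+\varepsilon)^m$ for $m \equiv -j \pmod k$ — i.e.\ as $j$ ranges uniformly over $\{0,\dots,k-1\}$, the ``grid'' of available radii is a uniformly random residue class of the fine geometric grid $\{(1+\varepsilon)^m\}_{m\in\mathbb{Z}}$. Hence $\lambda_{i(v)}$ is the smallest grid point of that random residue class that is $\ge t$; writing $t = (1+\varepsilon)^{m_0 + \theta}$ for the appropriate integer $m_0$ and $\theta \in [0,1)$, the quantity $\log_{1+\varepsilon}(\lambda_{i(v)}/t)$ is, up to the additive $\theta$ term, uniformly distributed over $\{0,1,\dots,k-1\}$. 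Consequently
$$
\expec{j}{\bigl(\ell^{\Opt'}_v/\ell^{\Opt}_v\bigr)^p}
\;\le\; \expec{j}{\bigl(2\lambda_{i(v)}/t\bigr)^p}
\;\le\; \frac{2^p}{k}\sum_{r=0}^{k-1}(1+\varepsilon)^{rp}
\;=\; \frac{2^p}{k}\cdot\frac{(1+\varepsilon)^{kp}-1}{(1+\varepsilon)^p-1}.
$$
Summing this over $v$ (and using $\|T^{\Opt'}\|_p^p = \sum_v (\ell^{\Opt'}_v)^p$, $\|T^{\Opt}\|_p^p = \sum_v (\ell^{\Opt}_v)^p$) gives $\expec{j}{\|T^{\Opt'}\|_p^p} \le K_{k,\varepsilon,p}\,\|T^{\Opt}\|_p^p$ with $K_{k,\varepsilon,p} = \tfrac{2^p}{k}\cdot\tfrac{(1+\varepsilon)^{kp}-1}{(1+\varepsilon)^p-1}$.

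It remains to check that $k$ can be chosen as $O(1+\varepsilon^{-2})$, independent of $p$, so that $K_{k,\varepsilon,p}\le 1+\varepsilon$. Here I expect the main obstacle: the factor $2^p$ and the geometric sum $(1+\varepsilon)^{kp}$ both grow with $p$, so the naive bound above does \emph{not} give $1+\varepsilon$ uniformly in $p$ — the crude ``$\le 2\lambda_i$'' estimate must be sharpened. The fix is to not lose the factor $2$ (or $1+3/c$) for \emph{every} vertex: for a vertex visited early within its sub-tour the bound $\ell^{\Opt'}_v \le 3\lambda_{i-1} + \ell^{\text{(position in }\Opt^{\lambda_i})}$ is much better, and only vertices near the end of the prefix pay close to the full factor. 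More precisely, within sub-tour $i$ the vertices visited are exactly those with $\ell^{\Opt}_v \in (\lambda_{i-1}, \lambda_i]$, and such a vertex at optimal time $t$ satisfies $\ell^{\Opt'}_v \le 3\lambda_{i-1} + t \le (1 + 3\lambda_{i-1}/t)\,t$. Since $t > \lambda_{i-1}$ this is at most $4t$ in the worst case but is $(1+3/c)t \to t$ when $t$ is a full sub-tour length larger than $\lambda_{i-1}$. Redoing the expectation with this refined per-vertex factor — i.e.\ integrating $(1 + 3(1+\varepsilon)^{-r})^p$ against the (near-)uniform distribution of $r=\log_{1+\varepsilon}(t/\lambda_{i-1})$ over $\{0,\dots,k-1\}$ — yields a bound of the form $\tfrac1k\sum_{r=0}^{k-1}(1+3(1+\varepsilon)^{-r})^p$, and one shows this is $\le 1+\varepsilon$ for $k = \Theta(\varepsilon^{-2})$: the $r=0$ term contributes $\le 4^p/k$ which is negligible once $k$ is polynomially large, while for $r \gtrsim \log_{1+\varepsilon}(p/\varepsilon) = O(\varepsilon^{-1}\log(p/\varepsilon))$ the term $(1+3(1+\varepsilon)^{-r})^p$ is already $1 + O(\varepsilon)$, so the ``bad'' range has length $O(\varepsilon^{-1}\log p)$ and choosing $k$ a suitable polynomial in $\varepsilon^{-1}$ (absorbing the $\log p$, which is itself $O(\log n)$ and handled by the quantization of distances, or bounded using $p \le$ something) drowns it out. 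This truncation/averaging estimate is the technical heart; everything else is bookkeeping.
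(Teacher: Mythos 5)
Your proposal is correct and essentially reproduces the paper's argument: after discarding your initial crude $2\lambda_i$ bound, you arrive at exactly the paper's per-vertex quantity $\frac{1}{k}\sum_{r=0}^{k-1}\bigl(1+3(1+\varepsilon)^{-r}\bigr)^p$, obtained from $\ell_v^{\Opt'} = \ell_v^{\Opt} + 3\lambda_{i-1}$ together with the observation that $\lambda_{i-1}/\ell_v^{\Opt}$ behaves like $(1+\varepsilon)^{-j'}$ for a uniform residue $j'$. The paper closes the estimate slightly more directly, bounding each summand's excess by $(3p)^p(1+\varepsilon)^{-r}$ and summing the geometric series to obtain $k \ge (3p)^p(1+\varepsilon)/\varepsilon^2$; like your range-splitting (which costs an extra $\log(1/\varepsilon)$), this makes the hidden constant in $k = O(1+\varepsilon^{-2})$ depend on $p$, which is treated as fixed.
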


\begin{proof}
All vertices are visited (for the first time) in the same order, by $\Opt$ and $\Opt'\,.$
Let the $d$\textsuperscript{th} service time by the optimal solution be $$T_d^\Opt \in ((1+\varepsilon)^\delta, (1+\varepsilon)^{\delta+1}]$$ for some integer $\delta \ge 0\,.$

\smallskip

If this vertex is visited in the $i$\textsuperscript{th} sub-tour of $\Opt'\,,$ we can write
$$
T_d^{\Opt'} =  T_d^{\Opt} + 3\lambda_{i-1}\,.
$$
We can bound this additional delay by 
$$T_d^{\Opt'} - T_d^{\Opt} \leq 3(1+\varepsilon)^{\delta-j'}$$ where $j'$ has the same distribution as $j\,.$

We can prove the desired by bounding the per-vertex ratio by
$$
\frac{\expec{j'}{(T_d^{\Opt'})^p}}{(T_d^\Opt)^p}
 \leq (1+\varepsilon)
$$
because
$
\frac{\sum_{i} a_i}{\sum_{i} b_i} \leq \max_{i} \frac{a_i}{b_i}
$
where $a_1, \cdots$ and $b_1, \cdots$ are positive real numbers.

Considering $p \ge 1$ and $T_d^{\Opt'} = T_d^{\Opt} + 3\lambda_{i-1}$, the left hand side of the target ratio is maximized for $T_d^{\Opt} = (1+\varepsilon)^\delta$, so it suffices to prove
$$
\frac{\expec{j'}{((1+\varepsilon)^\delta + 3\lambda_{i-1})^p}}{((1+\varepsilon)^\delta)^p}
 \leq (1+\varepsilon)\,.
$$
We will have
\begin{align*}
& \frac{\expec{j'}{((1+\varepsilon)^\delta + 3\lambda_{i-1})^p}}{((1+\varepsilon)^\delta)^p} \\
&\leq
\frac{\expec{j'}{((1+\varepsilon)^\delta + 3(1+\varepsilon)^{\delta-j'})^p}}{((1+\varepsilon)^\delta)^p} \\
&=
{\expec{j'}{(1 + 3(1+\varepsilon)^{-j'})^p}} \\
&=
1 + {\expec{j'}{(1 + 3(1+\varepsilon)^{-j'})^p- 1^p}} \\
&\leq
1 + \frac{1}{k} \sum_{j' = 0}^{k-1} (3p)^p (1+\varepsilon)^{-j'} \\
&\leq
1 + \frac{(3p)^p}{k} \cdot \frac{1}{1-(1+\varepsilon)^{-1}} \\
&=
1 + \frac{(3p)^p}{k} \cdot
\frac{1+\varepsilon}{\varepsilon}\,.
\end{align*}

To get the desired (from the last inequality) it suffices to assume
$$
k \ge \frac{(3p)^p(1+\varepsilon)}{\varepsilon^2}\,.
$$
\end{proof}

We can now complete the proof of Theorem \ref{thm:reduction}, similar to the main idea of Sitters \cite{S14}, presented as follows.

\smallskip

Lemma \ref{lem:loss} implies for some $j \in [k]$, where $k = O(1+\varepsilon^{-2})$, there exists a near optimal routing, $\Opt'\,,$ that for each 
$i \in [\tilde{O}(n^2)]$, visits \emph{new vertices} only during $[3\lambda_{i-1},\lambda_i]\,,$ and returns to the origin and remains there until $$3\lambda_i = \frac{3}{(1+\varepsilon)^j} \cdot (1+\varepsilon)^{ki}\,.$$

\smallskip

We can search for such a path by reconstructing $\Opt^{\lambda_i}$ for all $i$ and upper bounding the consequent $\|T^{\Opt'}\|_p^p$ using dynamic programming.

\smallskip

Define $D[i][d]$ as (an upper bound on) the contribution of visit times of vertices that are visited by $\Opt^{\lambda_i}$, to $\|T^{\Opt'}\|_p^p$, further assuming the number of these vertices is $d$.

\smallskip

We can compute $D[i][d]$ considering $O(n^k)$ cases of $(m_1, m_2, \cdots, m_k)$ where $m_r$ denotes the number of vertices that are visited by $\Opt^{\lambda_i}$ during
\begin{align*}
&(3\lambda_{i-1} + \lambda_i \cdot (1+\varepsilon)^{r-k-1}, 
3\lambda_{i-1} + \lambda_i \cdot (1+\varepsilon)^{r-k}]\,. 
\end{align*}
Note that it is necessary to have $\sum_r m_r \leq d$ and let $d' = d - \sum_r m_r$ be the number of vertices visited by $\Opt^{\lambda_{i-1}}$.
We can write
\begin{align*}
&D[i][d] = \min_{m_1,\cdots,m_k} D[i-1][d'] + \\ &\text{Seg-TSP}_i(d',m_{[r]}) \cdot \sum_r m_r \cdot (3\lambda_{i-1} + \lambda_i \cdot (1+\varepsilon)^{r-k})^p\,,
\end{align*}
where $\text{Seg-TSP}_i(d',m_{[r]})$ has value $1$ if segmented-TSP is feasible for visiting at least $$d',d'+m_1,\cdots,d'+m_1+\cdots+m_r$$ vertices by deadlines $$\lambda_{i-1},3\lambda_{i-1}+\lambda_i \cdot (1+\varepsilon)^{-k}, \cdots, 3\lambda_{i-1} + \lambda_i$$ is feasible, and otherwise has value $\infty$. Note that we have an $\alpha$ approximate solver for Segmented-TSP, though for convenience we can alternatively assume the traveller goes at the speed of $\alpha$ instead of $1$, to get a $1+\varepsilon$ approximate solution to $\|T^{\Opt}\|_p$ by $(D[\tilde{O}(n^2)][n])^{1/p}$. In the end, moving at unit speed (instead of $\alpha$) at every stage can increase (any) norm of the delay vector $\|T\|$ by a factor $\alpha$ so we have an $\alpha \cdot (1+\varepsilon)$ approximation, that was promised by Theorem~\ref{thm:reduction}.

\smallskip

Finally it is worth to mention that the route (instead of the value) can be reconstructed using update (parent) information of $D[\cdot][\cdot]$ and a constructive approximate solver for $\text{Seg-TSP}[\cdot]$ and we can short-cut potential re-visits of vertices to have a valid Hamiltonian route.

\section{ All-norm TSP}\label{sec:all-norm}

Since the introduction of a first constant approximation for TRP by Blum et al.\ \cite{BCCPRS94}, partial covering through applying a geometric series of limits on the length of the sub-tours has been a core in the design of routing algorithms.
In this section, we improve the $16$-approximate/competitive solution for all-norm TSP by a factor of $2$. We further provide a first lower bound for this problem.

\subsection{Approximation for General Metrics.}

The idea is to iteratively cover more and more vertices by sub-tours of exponentially (geometrically) increasing length while trying to maximize the total number of vertices that are visited (not necessarily for the first time) in each iteration.
Our algorithm 
uses the following milder relaxation of $k$-TSP, called a \emph{good $k$-tree}, in place of line $6$ in Algorithm \ref{meta-algorithm}.

\begin{definition}
A good $k$-tree is a tree of size $k$, including $s$, and with a total edge-weight of no more than that of the optimal $k$-TSP (starting from $s$).
\end{definition}

\begin{lemma}[\cite{CGRT03}]
A good $k$-tree can be found in polynomial time.
\end{lemma}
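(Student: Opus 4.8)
Proof proposal. The plan is to build the good $k$-tree through the Lagrangian relaxation of the $k$-MST problem, solved by the Goemans--Williamson primal-dual scheme, following Chaudhuri et al.~\cite{CGRT03} (using the streamlined variants of \cite{ALW08,PS14}). Write $\mathrm{OPT}_k$ for the length of an optimal $k$-TSP route from $s$. The first observation is that such a route is a \emph{path}, hence a tree spanning exactly $k$ vertices of the same weight; consequently the minimum-weight Steiner tree containing $s$ and at least $k$ vertices already has weight at most $\mathrm{OPT}_k$, and trimming leaves (other than $s$) of such a tree down to exactly $k$ vertices only decreases its weight. This ``ideal'' object is precisely a good $k$-tree, so the entire difficulty is to produce one in polynomial time, $k$-MST being NP-hard. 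As in the proof of the earlier lemma, I would first quantize the distances to be integral and polynomially bounded, so that a binary search over a Lagrange multiplier terminates after polynomially many rounds.

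Second, I would relax the constraint ``span at least $k$ vertices'' into the objective with a multiplier $\lambda \ge 0$; up to an additive constant this is exactly the rooted prize-collecting Steiner tree instance on $(V,s,d)$ with uniform penalty $\lambda$ on every vertex other than $s$. The Goemans--Williamson primal-dual algorithm for it is Lagrangian-multiplier-preserving: in polynomial time it returns a tree $T_\lambda \ni s$ and a feasible dual $y$ with
\[
c(T_\lambda) + 2\lambda\,\bigl|V\setminus V(T_\lambda)\bigr| \;\le\; 2\sum_{S} y_S \;\le\; 2\,\mathrm{OPT}_{\mathrm{PCST}}(\lambda).
\]
Feeding the optimal $k$-TSP path (viewed as a tree, together with the $n-k$ unvisited vertices whose penalties are paid) as a feasible prize-collecting solution gives $\mathrm{OPT}_{\mathrm{PCST}}(\lambda)\le \mathrm{OPT}_k+\lambda(n-k)$, so that
\[
c(T_\lambda) \;\le\; 2\,\mathrm{OPT}_k + 2\lambda\bigl(|V(T_\lambda)|-k\bigr).
\]

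Third, I would binary search on $\lambda$: at $\lambda=0$ the returned tree is $\{s\}$ and for $\lambda$ large it spans all of $V$, while $|V(T_\lambda)|$ behaves monotonically enough in $\lambda$ to locate a critical value. Either some $\lambda$ yields a tree on exactly $k$ vertices, in which case the displayed inequality already gives a tree of the right size and cost; or there is a breakpoint producing a ``small'' tree $A$ on $k_1<k$ vertices and a ``large'' tree $B$ on $k_2>k$ vertices, for which the convex-combination version of the bound reads $\alpha\, c(A)+(1-\alpha)\,c(B)\le 2\,\mathrm{OPT}_k$ with $\alpha\in[0,1]$ chosen so that $\alpha k_1+(1-\alpha)k_2=k$. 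The remaining task is to turn this pair into a single tree on at least $k$ vertices without inflating the cost, for which I would follow the tree-merging/subtree-swapping argument of \cite{CGRT03} that grafts a suitable portion of the cheap side onto the expensive side; this is exactly the step where comparing against a $k$-TSP \emph{path} (rather than a generic LP bound) is what pushes the final constant down to $1$. A final leaf-trimming step yields a tree of size exactly $k$ containing $s$, and the polynomial bound on the number of primal-dual invocations follows from the quantization.

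I expect the main obstacle to be precisely this last combining step --- making the breakpoint case produce cost at most $\mathrm{OPT}_k$ rather than the $2\,\mathrm{OPT}_k$ that falls out of the plain primal-dual guarantee --- together with verifying that the parametrized primal-dual is monotone enough in $\lambda$ that the binary search actually isolates such a breakpoint.
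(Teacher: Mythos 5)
You are following the same route the paper gestures at --- the Lagrangian/primal-dual machinery of Chaudhuri et al.\ \cite{CGRT03} (the paper offers no proof of its own, only that citation) --- but as written your chain of inequalities does not deliver the lemma. Routing the comparison through $\mathrm{OPT}_{\mathrm{PCST}}(\lambda)\le \mathrm{OPT}_k+\lambda(n-k)$ and then invoking the factor-$2$ Goemans--Williamson guarantee gives $c(T_\lambda)\le 2\,\mathrm{OPT}_k+2\lambda(|V(T_\lambda)|-k)$, so even in the lucky case $|V(T_\lambda)|=k$ you only get a tree of cost $2\,\mathrm{OPT}_k$; no breakpoint merging or subtree swapping afterwards can recover that factor of $2$, because the bound being combined is already off by it. The point of \cite{CGRT03} --- and what the paper's one-line description (``a feasible dual solution to $k$-TSP, that by weak duality has no less of a cost'') is alluding to --- is that the $2$ is cancelled \emph{before} any breakpoint analysis: one compares the primal-dual tree not with the PCST optimum but directly with the $k$-TSP route, using that the GW dual $y$ certifies $c(T_\lambda)\le 2\sum_S y_S$ (minus penalty terms) while a closed tour from $s$, or a doubled stroll, covering the spanned vertices crosses every cut $S$ with $y_S>0$ at least twice, whence $2\sum_S y_S \le \mathrm{OPT}_k$ plus the Lagrangian penalties. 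In other words, $y$ is feasible for the cut LP in which a route must cross each separating cut twice, and weak duality against \emph{that} LP --- not against PCST --- is what yields cost at most $\mathrm{OPT}_k$. Your closing remark correctly identifies this as the crux, but the mechanism you propose for it (grafting a cheap subtree onto the expensive tree) is not where the factor disappears.

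Two smaller points: monotonicity of $|V(T_\lambda)|$ in $\lambda$ is not actually guaranteed for the GW primal-dual, so ``behaves monotonically enough'' needs the standard workaround (binary search until two multipliers within $2^{-\mathrm{poly}}$ of each other produce trees straddling $k$); the initial quantization and the final leaf-trimming to exactly $k$ vertices are fine. The overall architecture is right; the missing idea is putting the correct object on the other side of weak duality.
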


Chaudhuri et.\ al.\ \cite{CGRT03} proved the above using a primal-dual approach \cite{garg19963,AK00} that allows finding a feasible solution to the primal (integer) linear program of the $k$-tree problem paired with a feasible dual solution to $k$-TSP, that by weak duality has no less of a cost.

We are now ready to prove Theorem \ref{thm:8apx}.

\allnormapproximation*

\begin{proof}
WLOG assume the nearest neighbor to $s$ is at distance $1$. For $k = 1, 2, \dots, n$ find a good-$k$-tree. Among these, name the largest tree (with respect to number of vertices) of total length at most $2^i$ as $G_i$ for $i = 0, 1, 2, \dots\,.$

\smallskip

Let $C_i$ be a (randomized) depth-first traversal of $G_i$ and let $C$ be the concatenation of $C_i$'s for $i = 0, \dots\,.$ The final tour $\Alg$ will visit vertices in the order that they appear in $C$, which does not increase the first-visit time for any vertex, due to triangle inequality of the metric, while short-cutting vertices that are being re-visited. 

\smallskip

Let $$T_k^\Opt \in [2^i, 2^{i+1})\,.$$ This shows the shortest (length) $k$-path in $G$ is no longer than $2^{i+1}\,.$ So our good-$k$-tree is no longer than $2^{i+1}\,,$ hence $C_{i+1}$ has at least $k$ distinct vertices, allowing us to upper bound our $k$\textsuperscript{th} visit time by
$$T_{k}^\Alg \leq \sum_{j = 0}^{i+1} |C_j| \leq \sum_{j = 0}^{i+1} 2 \times 2^{j} < 2^{i+3}\,.$$

Together with $T_k^\Opt \ge 2^i$ and the above inequality we have
$$
T_k^\Alg \leq 8 \times T_k^\Opt\,.
$$

We showed
$T_k^\Alg \leq 8 \cdot T_k^\Opt \quad \forall i \in [n]$, i.e., $T^\Opt$ is $8$-submajorized by $T^\Alg$ in terminology of \cite{GGKT08,HLP88}. The result is that $$\|T^\Alg\| \leq 8 \cdot \|T^\Opt\|$$ w.r.t.\ any norm $\|\cdot\|$.
\end{proof}

One can verify the above algorithm performs asymptotically $3$ times worse than the optimal TRP for the example with service points at $\{2^i : i \in \mathbb{N}\}$ and starting at $x = 0\,.$

\subsection{Inapproximability.}\label{sub:inappx}
We conclude this section by providing a lower bound for all-norm TSP. We show even for line metrics, an $\alpha$-approximate all-norm TSP cannot be guaranteed in general, for $\alpha < 1.78$.

\allnormTSPlowerbound*

\begin{proof}
We prove this for the special case of a line metric, i.e., when distances are absolute differences between points (vertices) on the real line. Our example has a similar structure as follows. Starting the walk from the origin at $x = 0$, there is a single destination at $x = -1$, in addition to $n$ destinations at $x \in \set{b^i-1 : i \in [n]}$, for $b = 1 + \varepsilon$.
The approximation ratio of the route that first visits points to the right of the origin with respect to $L_{\infty}$-TSP objective is
$
\frac{2b^{n}-1}{b^{n}+1}
$,
that converges to $2$ when $n \rightarrow \infty$.
Alternatively, the approximation ratio w.r.t. $L_1$-TSP for the route that first goes left, i.e., optimal $L_{\infty}$ route, is
$
\frac{1+2n+b^{n+1}/(b-1)-n-1}{b^{n+1}/(b-1)-n-1+2b^n-1}\,.
$
The minimum of these two ratios will be at least $1.67$, achieved for $n = 2100$, $\varepsilon = 1e-3$. Constructing a numerical example with similar structure, depicted as Figure~\ref{fig-example}, we considered all candidate optimal routes the measured approximation ratios with respect to various norms. The $\min \max$ over these ratios was $1.78$, verifying nonexistence of an approximate all-norm-TSP with better performance. For reproducibility we include this example in the Appendix. 
\end{proof}

\begin{figure*}[h]
    \centering
    {\includegraphics[width=15cm]{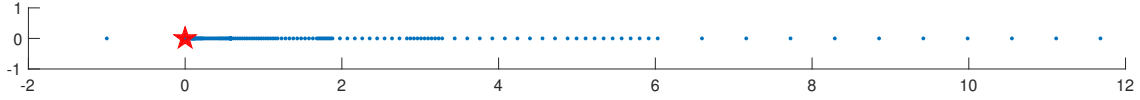}}
    \caption{An example with no better than $1.78$ all-norm TSP}%
    \label{fig-example}%
\end{figure*}

The above example, is yet another motivation to study and optimize routing algorithms specific to the appropriate objective/norm, as one solution \emph{cannot} be good for all.

\section{Traveling firefighter in general metrics}\label{sec:TFP}

In this section we build upon our geometric partial-covering algorithm with good $k$-trees to improve approximation bound for a specific norm, i.e., the Traveling Firefighter Problem.

\smallskip

We achieve the improved approximation bound by randomization (of parameter $b$) and optimization of our analysis w.r.t.\ approximation guarantee for a specific norm. Our approach can provide approximation guarantees (better than $8$) for other $L_p$-TSP problems. We present main ideas in the rest of the section by proving the following result.

\begin{theorem}\label{thm:TFP2}
Traveling Firefighter Problem for general metrics can be $5.641$-approximated in polynomial time.
\end{theorem}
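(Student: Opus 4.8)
# Proof Proposal for Theorem~\ref{thm:TFP2}

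The plan is to instantiate Algorithm~\ref{meta-algorithm} with good $k$-trees (as in the proof of Theorem~\ref{thm:8apx}), but with a randomized base parameter and a carefully optimized growth factor $c$, then analyze the resulting $L_2$ objective directly rather than going through submajorization. First I would set $c \in (1,\infty)$ to be optimized later, draw $b$ so that $\log b$ is uniform on $[0, \log c)$ (equivalently $b = c^U$ with $U$ uniform on $[0,1)$), and for each $i \ge 0$ let $G_i$ be the largest good $k$-tree of total length at most $b c^i$, traversed by a depth-first walk $C_i$; the output route $\sigma$ orders vertices by first-visit time in the concatenation $C_0 C_1 C_2 \cdots$, short-cutting revisits. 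As in Theorem~\ref{thm:8apx}, if $T_k^{\Opt} \le b c^i$ then the shortest $k$-path has length $\le b c^i$, so $G_i$ contains at least $k$ vertices and hence $T_k^{\Alg} \le \sum_{j=0}^{i} 2 b c^j = 2b c^i \cdot \frac{c}{c-1} \cdot (1 - c^{-i-1}) \le \frac{2c}{c-1} \, b c^i$. The point of randomizing $b$ is that the smallest $i$ with $b c^i \ge T_k^{\Opt}$ has $b c^i = T_k^{\Opt} \cdot c^{R}$ where $R$ is (essentially) uniform on $[0,1)$, so $T_k^{\Alg} \le \frac{2c}{c-1} c^{R} \, T_k^{\Opt}$ with $R \sim \mathrm{Unif}[0,1)$.

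Next I would bound the expected squared $L_2$ cost. Writing $\|\ell^\sigma\|_2^2 = \sum_k (T_k^{\Alg})^2$ and using the per-index bound,
\begin{align*}
\expec{}{\|\ell^{\Alg}\|_2^2} &\le \sum_k \expec{R}{\Bigl(\tfrac{2c}{c-1}\Bigr)^2 c^{2R}} (T_k^{\Opt})^2 \\
&= \Bigl(\tfrac{2c}{c-1}\Bigr)^2 \cdot \frac{c^2 - 1}{2\ln c} \cdot \|\ell^{\Opt}\|_2^2 \\
&= \frac{2 c^2 (c+1)}{(c-1)\ln c} \cdot \|\ell^{\Opt}\|_2^2,
\end{align*}
since $\expec{R}{c^{2R}} = \int_0^1 c^{2r}\,dr = \frac{c^2-1}{2\ln c}$. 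Then by Jensen (or just $\expec{}{\|\ell\|_2} \le \sqrt{\expec{}{\|\ell\|_2^2}}$) the expected $L_2$ cost is at most $\sqrt{\frac{2c^2(c+1)}{(c-1)\ln c}} \cdot \|\ell^{\Opt}\|_2$. Minimizing the function $f(c) = \frac{2c^2(c+1)}{(c-1)\ln c}$ over $c > 1$ numerically gives a minimizer around $c \approx 3.5$ with $\sqrt{f(c)} \approx 5.64$, which yields the claimed $5.641$-approximation in expectation. Finally I would note the algorithm can be de-randomized by quantizing $U$ over $O(1/\delta)$ equally spaced values in $[0,1)$ and taking the best resulting route, losing only a $(1+\delta)$ factor (absorbed into the stated constant, or handled by running at a slightly smaller target ratio); correctness and polynomial runtime follow because only $O(n)$ good $k$-trees and $O(\log_c(\mathrm{poly}(n)))$ scales are ever needed.

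The main obstacle I anticipate is making the "$R$ is uniform on $[0,1)$" claim fully rigorous simultaneously across all indices $k$ with a single shared random $b$: the relevant index $i(k)$ and the offset $R(k) = \log_c(b c^{i(k)} / T_k^{\Opt})$ both depend on $k$, and while each $R(k)$ is marginally uniform, I need the per-index inequality $\expec{R}{(T_k^{\Alg})^2} \le (\tfrac{2c}{c-1})^2 \expec{}{c^{2R}} (T_k^{\Opt})^2$ to hold for every $k$ separately (which it does, by linearity of expectation over $k$ — no joint independence is needed), so the real care is just in checking that the geometric-sum bound $T_k^{\Alg} \le \frac{2c}{c-1} b c^{i(k)}$ is valid for the actual realized $b$ and that $b c^{i(k)}$ is exactly $T_k^{\Opt} c^{R(k)}$ up to the boundary convention. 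A secondary technical point is handling the initial scale (WLOG the nearest neighbor is at distance $1$, so indices start at a well-defined $i_0$ and the tail sum $\sum_{j} c^{-j}$ converges cleanly) and confirming that short-cutting only decreases visit times via the triangle inequality, exactly as in Theorem~\ref{thm:8apx}. The optimization of $f(c)$ itself is routine calculus and I would not belabor it beyond reporting the optimal $c$ and the resulting constant.
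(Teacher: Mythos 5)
Your proposal is correct and follows essentially the same route as the paper: the same algorithm (good $k$-trees of length at most $bc^i$ with $b=c^U$, $U$ uniform), the same per-index analysis via $\expec{}{c^{2R}}=\frac{c^2-1}{2\ln c}$, and the same final expression $\frac{2c^2(c+1)}{(c-1)\ln c}$; the only structural difference is that you bound $T_k^{\Alg}$ by a full traversal of sub-tour $i$, whereas the paper additionally randomizes the orientation of each $C_i$ and applies convexity to the partial last sub-tour, which ends up at the identical constant. One numerical slip to fix: the minimizer of $f(c)=\frac{2c^2(c+1)}{(c-1)\ln c}$ is $c\simeq 2.54$, where $f\simeq 31.82$ and $\sqrt{f}\simeq 5.641$, not $c\approx 3.5$ (where $\sqrt{f}\approx 5.93$).
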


\begin{proof}
We present a randomized approximation algorithm for general metrics, that can be efficiently de-randomized.

\smallskip

Among the set of good $k$-trees, pre-computed for all $k\,,$ let $G_i$ be the largest one of total length at most $b\cdot c^i\,,$ and let $C_i$ be a depth first traversal of that. Parameter $c > 1$ is a constant, to be optimized for performance guarantees, and $$[1,c] \ni b = c^U$$ where $U$ is a random variable distributed uniformly over the interval $[0,1]\,.$ Finally, we reverse each $C_i$ with probability half, and concatenate $C_i$'s (and shortcut repeated visits) to achieve the output ordering $\Alg\,.$

\smallskip 

Let the latency of the $k$\textsuperscript{th} vertex visited by the optimal route be $$T_k^\Opt = a c^i\,,$$ for some $a \in [1, c]$ and integer $i \ge 0\,.$ 

\smallskip

It is easy to see that our $(i + \mathbbm{1}[a \ge b])$\textsuperscript{th} sub-tour contains at least $k$ vertices, hence, we can bound
$$
T_k^\Alg \leq X_k + \sum_{j = 0}^{i - \mathbbm{1}[a < b]} 2 b c^j\,,
$$
Where $X_k$ is zero if our tour visits its $k$\textsuperscript{th} vertex before the $(i - \mathbbm{1}[{a < b}])$\textsuperscript{th} sub-tour, and $$X_k \in [0, 2bc^{i + \mathbbm{1}[a \ge b]}]\,,$$ depending on its location in the sub-tour.

\smallskip

Due to convexity of the quadratic function we can bound the expected damage, with
\begin{align*}
&\expec{}{(T_k^\Alg)^2} \\&\leq \frac{1}{2} \left( \sum_{j = 0}^{i - \mathbbm{1}[a < b]} 2 b c^j \right)^2 \\
&+ \frac{1}{2} \left(2bc^{i+1 - \mathbbm{1}[a < b]} + \sum_{j = 0}^{i - \mathbbm{1}[a < b]} 2 b c^j\right)^2 \\
&= \frac{1}{2}\left( 2b \frac{c^{i + \mathbbm{1}[a \ge b]}-1}{c-1} \right)^2 \\
&+ \frac{1}{2}\left( 2b \frac{c^{i+1+ \mathbbm{1}[a \ge b]}-1}{c-1} \right)^2\\
&\leq \frac{1}{2} \Bigl(\frac{2bc}{c-1}\Bigr)^2 c^{2(i+\mathbbm{1}[a \ge b])} \\
&= c^{2i} \cdot \frac{2c^2}{(c-1)^2} \Bigl(b^2c^{2 \cdot \mathbbm{1}[a \ge b]}\Bigr)\,.
\end{align*}

Bounding the expected damage at the $i$\textsuperscript{th} service, considering random variable $b$ we have
\begin{align*}
&\expec{}{(T_k^\Alg)^2} \\ &\leq \expec{}{c^{2i} \cdot \frac{2c^2}{(c-1)^2} \left(b^2c^{\mathbbm{2}[a \ge b]}\right)} \\
&= c^{2i} \cdot \frac{2c^2}{(c-1)^2} \left( c^2\int_0^{\log_c{a}} c^{2U} dU  + \int_{\log_c{a}}^1 c^{2U} dU \right) \\
&= c^{2i} \frac{2c^2}{(c-1)^2} \left( \frac{c^2(a^2-1)}{\ln c} + \frac{c^2 - a^2}{\ln c}\right) \\
&= (ac^i)^2 \left(\frac{2c^2 \cdot (c^2-1)}{(c-1)^2 \ln c}\right) \\
&= (T_k^O)^2 \left(\frac{2c^2(c+1)}{(c-1)\ln c}\right)\,.
\end{align*}

We can now choose $c$ in order to minimize the multiplicative bound
$$
\frac{c+1}{c-1} \cdot 2c^2 / \ln c \leq 31.82
$$
that can be achieved for  $c \simeq 2.54\,.$ With this we have
$$
\expec{}{\|T^\Alg\|_2^2} \leq 31.82 \|T^\Opt\|_2^2\,.
$$

We provided a randomized algorithm, that along law of large numbers, can be applied in practice, by repeating the algorithm and reporting the best performing route. On the other hand it can be simply de-randomized by exploring all values for a dense enough quantization of $b\,.$ In the first case we will have a $\sqrt{31.82} \simeq 5.641$ approximate $L_2$-TSP with high probability, and in latter scenario we will have deterministic result with negligible additional approximation error.
\end{proof}

\section{Generalizations to Multiple Vehicles}\label{sec:multi}

In many applications we have multiple vehicles, potentially dispatching from different hubs \cite{kulichmulti20}. 

\smallskip

Our reduction of $L_p$-TSP to Segmented-TSP can be generalized to multiple vehicles, with arbitrary start locations. Similar to the approach in Section~\ref{sec:complx} we can convert any optimal multi-vehicle solution to repetition of $O(\varepsilon^{-2})$ of prefix routes for each vehicle, all synchronized with a single $j \in \{0, \cdots, k-1\}$, picked uniformly at random. Lemma~\ref{lem:loss} can be subsequently adapted to allow limiting the search space to solutions that have all vehicles at starting locations simultaneously at all $3\lambda_i$, with negligible degrade of the optima.

\smallskip

Finally, adapting the dynamic programming, we can guarantee a multiplicative $O(\varepsilon)$ loss given an (approximate) solver for multi-vehicle segmented TSP with constant $O(\varepsilon^{-2})$ deadlines, and the corresponding total number of destinations to be visited (by at least one vehicle) until up to each. This is indeed fruitful as results on segmented-TSP also generalize to multi-vehicle variant, e.g., in tree-distance or Euclidean metric.

\smallskip

Our algorithms can be similarly adapted in the case where release dates are added for the destinations. 

\smallskip

Generalizing results in Sections~\ref{sec:all-norm}-\ref{sec:TFP} to multi-vehicle variants of the problems is also possible. For this purpose, the $k$-stroll subproblem, for which we used the mild relaxation of good $k$-tree, can be generalized to bottleneck-stroll of Post and Swamy \cite{PS14}. This will be at the cost of further degrades to the approximation constants and proposes interesting open problems for study. For general metrics, current best approximation guarantees for multi vehicle $\{\text{single},\text{multi}\}$ depot $L_1$-TSP is $\{7.183,8.497\}$ \cite{PS14}.

\section*{Discussion and Open Problems}

We studied combinatorial optimization problems whose objectives can be more appropriate, efficient, fair, and adjustable, depending on enormous applications of optimal routing/scheduling.

\medskip

For TSP and TRP, the analyses of approximation algorithms as well as complexity results heavily rely on the linearity of the objective function. Hence, TFP and more generally $L_p$-TSP pose further challenging problems and require new techniques to be developed.

\medskip

We developed two approaches, towards high precision and scalable approximation of the answer.

\medskip

First, we provided a high precision polynomial time reduction of $L_p$-TSP to segmented-TSP with only a constant number of deadlines for visiting the required number of destinations.
Our reduction enables approximation schemes for $L_p$-TSP on Euclidean as well as weighted tree metrics; this is yet another motivation to further study the segmented-TSP problem.

\medskip

The other approach relied mainly on the fact that the objective is a norm of the delay vector. In this line, we developed an algorithm for all-norm-TSP, on general metrics, of approximation factor $8$. We also provided a first inapproximability result for all-norm-TSP. 

\medskip

Last but not least, we showed how the performance of the latter algorithm can be optimized for a specific norm, particularly approximating Traveling Firefighter Problem within a factor $5.65$.

\medskip

In the end, in addition to further improving the approximation bounds for the problems under study, we mention but a few of many interesting open problems and potential research directions regarding $L_p$-TSP and all-norm-TSP.

\begin{itemize}
    \item $L_1$-TSP, i.e., TRP is harder than $L_\infty$-TSP, at least on trees. For what $p$ is the $L_p$-TSP problem the hardest?
    \item While TRP is strongly NP-hard on weighted trees, its complexity is unknown for caterpillars \cite{S02}. Similarly TFP and $L_p$-TSP seem challenging even on such fundamental examples, that is yet to be resolved.
    \item While we theoretically claimed $p = 2$ to be ideal for the Traveling Firefighter Problem, 
    this assumption should be given further justification / investigation  in practice.
    \item Further applications of $L_p$-TSP can be inspected, e.g.\ in optimal containment of spread of pandemics \cite{hartke2004attempting,tennenholtz2020sequential}.
    \item We observe $c = 2$ to be optimal for the analysis of all-norm-TSP algorithm. For TRP, the current best result is due to a base $c \approx 3.59$ for the geometric series, while $c \approx 2.54$ is better for TFP, as we discussed. In this vein, one can inspect the best base for the geometric series used by the partial covering algorithm, depending on the objective, which naturally seems to be non-increasing on $p$. 

    \item Stronger impossibility results for all-norm-TSP and even larger hardness of approximation bounds for this problem seem plausible.
\end{itemize}

\section*{Acknowledgements}
We would like to thank anonymous reviewers for their comments.
First author would like to thank Joseph Bakhtiar for discussions, and Jai Moondra for a careful reading of the paper, and their helpful comments.

\bibliography{refs}
\bibliographystyle{amsalpha}

\newpage

\appendix

\section{From Section \ref{sec:all-norm}}
Our example for $1.78$ inapproximability of all-norm-TSP has similar structure as the exponential sequence presented in Section \ref{sec:all-norm}, yet is computationally tuned and verified. This suggests the best (worst) example can have a different structure. Follows the locations of the destinations over the line to be visited by a traveler, starting at $x = 200$. All points are over the $x$-axis. Figure~\ref{fig:last} depicts performance of candidate routes with respect to different norms.

\begin{align*}
    V = \{ 0, 200, 202, 204, 206, 208, 210, \\212, 214, 216, 217, 218, 219, 220, 221, \\222, 223, 224, 225, 226, 228, 230, 232, \\234, 236, 238, 240, 242, 244, 246, 250, \\254, 258, 262, 266, 270, 274, 278, 282, \\286, 289, 292, 295, 298, 301, 304, 307, \\310, 313, 316, 316, 316, 316, 316, 316, \\316, 316, 316, 316, 316, 322, 328, 334, \\340, 346, 352, 358, 364, 370, 376, 382, \\388, 394, 400, 406, 412, 418, 424, 430, \\436, 446, 456, 466, 476, 486, 496, 506, \\516, 526, 536, 540, 544, 548, 552, 556, \\560, 564, 568, 572, 576, 595, 614, 633, \\652, 671, 690, 709, 728, 747, 766, 775, \\784, 793, 802, 811, 820, 829, 838, 847, \\856, 888, 920, 952, 984, 1016, 1048, \\1080, 1112, 1144, 1176, 1199, 1222, \\1245, 1268, 1291, 1314, 1337, 1360, \\1383, 1406, 1519, 1632, 1745, 1858, \\1971, 2084, 2197, 2310, 2423, 2536 \}
\end{align*}

\begin{figure}[t]
    \centering
    \includegraphics[width=0.5\textwidth]{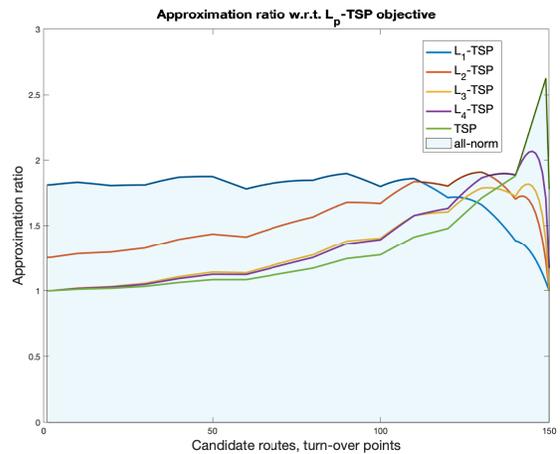}
    \caption{Nonexistence of a $1.78$-approximate all-norm-TSP}
    \label{fig:last}
\end{figure}

\end{document}

\section{OTHER}
In this section we show $L_p$-TSP has no PTAS on general metrics and further reduce its approximation (up to arbitrary small multiplicative error) to segmented-TSP.

\subsection{APX-Hardness}
Let us restate Proposition \ref{prop:APX}. The proof relies on hardness results for Hamiltonian path.

\begin{theorem}
Any $L_p$-TSP is strongly NP-hard. Moreover, it has no PTAS, unless $P = NP\,.$
\end{theorem}
\begin{proof}[Proof idea]
Similar to the approach by Sahni and Gonzalez \cite{ST76}, we can provide a reduction from Hamiltonian path problem. Given a graph $H = (V,E)\,,$ we are to decide whether a Hamiltonian path starting at $s$ exists. We construct a complete graph $G = K_{|V|}$ with the following edge lengths.

\begin{align}
    d(i,j) =
    \begin{cases}
    1, & \text{for } (i,j) \in E\\
    \lambda+1, & \text{otherwise}
    \end{cases}
\end{align}

It is easy to see the visit latency vector for $G$ is $(0,\cdots,n-1)$ iff $H$ admits a Hamiltonian path starting at $s\,.$ Otherwise the largest entry of the vector is increased by at least $\lambda+1\,,$ while neither decreases. The norm of the vector is subject to a constant increase for some $\lambda \in O(n^2)\,.$
\end{proof}

\section{Traveling Firefighter Problem}\label{sec:TFP}

\end{document}